\documentclass[11pt]{article}

\usepackage{fullpage}
\usepackage{amsmath,amsfonts,amsthm,xspace,hyperref,graphicx,relsize,bm}
\usepackage{mathpazo}
\usepackage[margin=1in]{geometry}
\usepackage{float}
\usepackage{endnotes}
\usepackage{enumitem}
\usepackage{color}
\usepackage{amssymb,latexsym}
\usepackage{multicol}
\usepackage{authblk}
\allowdisplaybreaks

\setlength{\parindent}{1.5em}

\newtheorem{theorem}{Theorem}
\newtheorem{proposition}[theorem]{Proposition}
\newtheorem{lemma}[theorem]{Lemma}

\newtheorem{fact}[theorem]{Fact}

\theoremstyle{definition}

\newcommand{\ket}[1]{|#1\rangle}
\newcommand{\bra}[1]{\langle#1|}

\newcommand{\ketbra}[2]{|#1\rangle\! \langle #2|}
\newcommand{\Tr}{\mbox{\rm Tr}}
\newcommand{\Id}{\mathbb{I}}

\DeclareMathOperator*{\Ex}{\mathbb{E}}

\newcommand{\C}{\ensuremath{\mathbb{C}}}

\newcommand{\E}{\mathcal{E}}

\renewcommand{\P}{\mathsf{P}}
\newcommand{\Q}{\mathsf{Q}}

\newcommand{\eps}{\varepsilon}

\newcommand{\X}{\mathcal{X}}
\newcommand{\Y}{\mathcal{Y}}
\newcommand{\A}{\mathcal{A}}
\newcommand{\B}{\mathcal{B}}

\newcommand{\mi}{{-i}}

\newcommand{\wt}[1]{\widetilde{#1}}
\newcommand{\what}[1]{\widehat{#1}}
\newcommand{\eval}{\mathrm{val}^*}
\newcommand{\val}{\mathrm{val}}
\newcommand{\ac}{\mathrm{Z}}

\newcommand{\xvec}{{x_{[n]}}}
\newcommand{\yvec}{{y_{[n]}}}
\newcommand{\avec}{{a_{[n]}}}
\newcommand{\bvec}{{b_{[n]}}}

\newcommand{\Xvec}{X_{[n]}}
\newcommand{\Yvec}{Y_{[n]}}

\bibliographystyle{alpha}

\begin{document}

\title{A parallel repetition theorem for all entangled games}
\author{Henry Yuen\thanks{hyuen@mit.edu} \\ MIT }

\date{}
\maketitle

\begin{abstract}
	The behavior of games repeated in parallel, when played with quantumly entangled players, has received much attention in recent years. Quantum analogues of Raz's classical parallel repetition theorem have been proved for many special classes of games. However, for general entangled games no parallel repetition theorem was known.
	
	We prove that the entangled value of a two-player game $G$ repeated $n$ times in parallel is at most $c_G n^{-1/4} \log n$ for a constant $c_G$ depending on $G$, provided that the entangled value of $G$ is less than $1$. In particular, this gives the first proof that the entangled value of a parallel repeated game must converge to $0$ for \emph{all} games whose entangled value is less than $1$. Central to our proof is a combination of both classical and quantum correlated sampling.
 \end{abstract}

\section{Introduction}
%
%

A two-player one-round game $G$ is played between a referee and two isolated players (who we will call Alice and Bob), who communicate only with the referee and not between themselves. The referee first samples a question pair $(x,y)$ from some distribution $\mu$ and sends $x$ to Alice and $y$ to Bob. Alice and Bob respond with answers $a$ and $b$ respectively, and they win if $V(x,y,a,b) = 1$ for some predicate $V$. 

The maximum winning probability of Alice and Bob in a game $G$ is a quantity that depends on what resources they are allowed to use. If their answers are a deterministic function of their received question (and perhaps some public random string), then we call their maximum winning probability the \emph{classical value} of $G$, denoted by $\val(G)$. However quantum mechanics allows Alice and Bob to share a resource called \emph{entanglement}, which gives rise to correlations that cannot be reproduced with public randomness only. When Alice and Bob make use of entanglement to play a game $G$, we call their maximum winning probability the \emph{entangled value} of $G$, denoted by $\eval(G)$. For all games, the classical value is at most the entangled value. Cast in the language of games, the famous Bell's Theorem states that there exist games $G$ where those values are different: $\eval(G) > \val(G)$~\cite{bell1964}.

The Parallel Repetition Question is the following natural and basic question: given a game $G$ with value less than $1$, what is the value of the game $G^n$, wherein Alice and Bob play $n$ independent instances of $G$ played \emph{in parallel}? More formally, in the game $G^n$, the referee samples $n$ independent question pairs $(x_1,y_1),\ldots,(x_n,y_n)$ from $\mu$, and sends $(x_1,\ldots,x_n)$ to Alice, and sends $(y_1,\ldots,y_n)$ to Bob. Alice responds with answer tuple $(a_1,\ldots,a_n)$, Bob responds with $(b_1,\ldots,b_n)$, and the players win if for all coordinates $i \in [n]$, $V(x_i,y_i,a_i,b_i) = 1$.

The difficulty in relating $\val(G^n)$ with $\val(G)$ and $n$ is that even though each of the $n$ instances of $G$ in $G^n$ are independent, Alice and Bob need not play each instance independently. For example, since Alice receives $(x_1,\ldots,x_n)$ all at once, she can use some question $x_j$ to answer the $i$'th game, and Bob can do something similar. Because of such strategies, for every $k$ there are games $G$ such that $\val(G^k) = \val(G) < 1$. This shows that the naive expectation that $\val(G^n) = \val(G)^n$ is false. 

The naive expectation is not too far from the truth, however: Raz's Parallel Repetition Theorem~\cite{raz1998parallel} states that
$$
	\val(G^n) \leq (1 - (1 - \val(G))^3)^{c_G n},
$$ 
where $c_G$ is a constant depending on $G$. In particular, as $n$ goes to infinity, the classical success probability goes to $0$ exponentially fast in $n$ (provided that $\val(G) < 1$). The proof is highly nontrivial, although it has been simplified and improved upon in recent years~\cite{Hol09,Braverman2015}. Raz's Parallel Repetition Theorem has heavily influenced complexity theory, most notably in the areas of hardness of approximation~\cite{hastad2001} and communication complexity~\cite{jain2011,braverman2013direct}.

One open question, which we call the Quantum Parallel Repetition Conjecture, asks whether an analogue of Raz's Parallel Repetition Theorem holds in the setting of entangled players. The Quantum Parallel Repetition Conjecture has been resolved for many special cases of games, including free games~\cite{chailloux2014parallel,JainPY14,chung2015parallel}, projection games~\cite{DinurSV14}, XOR games~\cite{cleve08}, unique games~\cite{kempe2008}, anchored games~\cite{BVY15anchoring}, and fortified games~\cite{BVY15fort}. However, the general case has remained elusive. Not only do we not know of a quantum analogue of Raz's Parallel Repetition Theorem, it hasn't even been shown that if $\eval(G) < 1$, then $\eval(G^n)$ goes to $0$ as $n$ goes to infinity! Could quantum entanglement allow players to counteract the value-decreasing effect of parallel repetition?

In this paper we prove that for all nontrivial entangled games $G$ (i.e. $\eval(G) < 1$), the entangled value of $G^n$ must converge to $0$. This resolves a weaker version of the Quantum Parallel Repetition Conjecture for general games. Quantitatively, our result is the following:

\begin{theorem}[Main Theorem]
\label{thm:main}
	Let $G$ be a game involving two entangled players with $\eval(G) = 1 - \eps$. Then for all integer $n > 0$,
	$$
		\eval(G^n) \leq c \cdot \frac{s_G \log n}{\eps^{17} n^{1/4}}
	$$
	where $c$ is a universal constant and $s_G$ is the bit-length of the players' answers in $G$.
\end{theorem}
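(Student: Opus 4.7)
The plan is to argue by contradiction along the ``de-parallelization'' template introduced by Raz and simplified by Holenstein, adapted to entangled strategies. I would fix a strategy $\sigma$ for $G^n$ achieving success probability $\delta$ substantially larger than the bound in the theorem, and then extract from $\sigma$ a strategy for a single copy of $G$ winning with probability strictly greater than $1-\eps$, contradicting $\eval(G) = 1-\eps$. Solving for the largest $\delta$ for which this extraction fails will yield the claimed upper bound $c\, s_G \log n / (\eps^{17} n^{1/4})$.

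The first step is to locate a ``good'' coordinate. Writing $W_C = \bigwedge_{i \in C} W_i$ for the event that the players win every coordinate in $C$, I would pick a random subset $C \subset [n]$ of size roughly $\sqrt{n}$ and a uniformly random index $i \notin C$. A standard entropy chain-rule computation---of the flavor used in earlier entangled parallel repetition proofs for special classes of games such as free, anchored, and fortified games---should show that, in expectation over $C$ and $i$, conditioning on $W_C$ and the classical transcript $(x_C, y_C, a_C, b_C)$ leaves (i) the marginal distribution of $(X_i, Y_i)$ within $O(\poly(\eps^{-1}) n^{-1/2})$ of $\mu$ in total variation distance, and (ii) the post-measurement reduced quantum state on coordinate $i$ close in trace distance to a product $\rho^A_{x_i} \otimes \sigma^B_{y_i}$ of ``dependent states'' that each depend on only one player's question. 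The $s_G$ factor enters here because the transcript $(a_C,b_C)$ carries $O(s_G \sqrt{n})$ bits, which caps the information it can reveal about the ``unseen'' coordinate $i$.

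The second step is the correlated-sampling simulation. On input $(x_i,y_i) \sim \mu$ to a single copy of $G$, Alice and Bob use shared classical randomness to select $C$ and $i$, then invoke classical correlated sampling (Holenstein's rejection-sampling protocol) to jointly draw $(x_C, y_C, a_C, b_C, x_{-i}, y_{-i})$ from a distribution matching the posterior given $W_C$ and $(x_i,y_i)$ up to $O(n^{-1/2})$ additive error, since each player's one-sided posterior is close to this target. For the shared entangled state they invoke a quantum correlated-sampling procedure: given individual classical descriptions of $\rho^A_{x_i}$ and $\sigma^B_{y_i}$, they jointly produce an entangled state within trace distance $O(n^{-1/4})$ of the true conditional state on coordinate $i$. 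Finally, they apply Alice's and Bob's original measurements for coordinate $i$ and output the resulting answers; a union bound over these errors and the value of the underlying $n$-fold strategy gives success probability at least $\delta - O(\poly(\eps^{-1}) s_G n^{-1/4} \log n)$ on a single copy of $G$.

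The hard part will be the quantum correlated-sampling step. Classically, when two players' posterior distributions are both $\eta$-close to the intended posterior, shared randomness and rejection sampling yield an $O(\eta)$-close joint sample. In the quantum setting, the players must agree on an entangled state while each holds only a one-sided approximation to its purification, and all known quantum correlated-sampling techniques (based on SWAP tests or on convex-combination embeddings) only convert an $\eta$ trace-distance approximation into $O(\sqrt{\eta})$ error after sampling. Composing this square-root loss with the $n^{-1/2}$ information-theoretic closeness is precisely what produces the $n^{-1/4}$ rate of the theorem, with the $\log n$ factor arising from a union bound over the classical sampling steps; careful tracking of the $\eps$-dependence through the entropy estimates and the dependent-state approximation will account for the $\eps^{17}$ prefactor.
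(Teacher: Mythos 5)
Your high-level skeleton (reduction to a single copy via a good coordinate, classical correlated sampling of a transcript, quantum correlated sampling of the shared state) matches the paper's, but the proposal glosses over exactly the step that is the heart of the problem for general games, and the way you state it is false. You claim that a ``standard entropy chain-rule computation'' shows the post-measurement state on coordinate $i$, conditioned on $W_C$ and the transcript, is close in trace distance to a product $\rho^A_{x_i}\otimes\sigma^B_{y_i}$ depending on one question each. It cannot be: to reproduce the original strategy's answer statistics in coordinate $i$ (which is what the reduction needs), the players must still share an \emph{entangled} conditional state, which in general is far from every product state. The correct target — and what the paper proves — is that one can define a single bipartite entangled ``dependency-breaking'' state $\ket{\Psi_{r_{-i},x_i,y_i}}$ whose local measurement reproduces $\P_{A_iB_i|r_{-i},x_i,y_i}$ (Usefulness) and which is close to states $\ket{\Psi_{r_{-i},x_i}}$ and $\ket{\Psi_{r_{-i},y_i}}$ depending on only one player's question (Sampleability). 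Establishing Sampleability is not a routine chain-rule argument, because a player can ``print'' his entire question vector into the shared state, so the naive post-measurement state genuinely depends on both questions. The paper must construct the states using coarse-grained measurements for the $C$-coordinates together with polar-decomposition unitaries (chosen so that $S_{\omega_{-i},x_i,a_C}\sqrt{\rho}$ is positive semidefinite) that undo such adversarial dependence, and then run quantum Raz's lemma, Pinsker, and Powers--St{\o}rmer, followed by an explicit renormalization by $\P(W_C)$ when conditioning the quantum state on the win event. That division by $\P(W_C)$ — absent from your accounting — is precisely what limits the result to polynomial decay; your proposal never explains how conditioning an entangled state on $W_C$ is controlled at all.

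Several quantitative and structural choices are also off. The paper takes $|C| = O(\eps^{-1}\log(1/\P(W))) = O(\eps^{-1}\log n)$, not $\sqrt{n}$; with $|C|\sim\sqrt{n}$ the answer transcript costs $s_G\sqrt{n}$ bits, the per-coordinate divergence is only $\sim s_G/\sqrt{n}$, and after the unavoidable fourth-root losses (Pinsker plus trace-to-Euclidean) and the division by $\P(W_C)$ you would land around $n^{-1/8}$, not $n^{-1/4}$. Your classical sampling step asks the players to jointly sample the full $(x_{-i},y_{-i},a_C,b_C)$; Holenstein's argument (and the paper) instead samples the one-sided variable $\Omega$ in which each coordinate $j\neq i$ contains only one of $x_j,y_j$, chosen by a random bit — this structure is what makes the posteriors given $x_i$ alone and given $y_i$ alone provably close, and also what makes $X_{[n]}$ and $Y_{[n]}$ conditionally independent. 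Finally, the quantum correlated sampling used here is the embezzlement-based procedure of Dinur--Steurer--Vidick, whose error scales like the $1/6$ (and in the averaged form $1/12$) power of the states' distance, not a square root; the claimed origins of the $n^{-1/4}$, $\log n$, and $\eps^{17}$ factors in your last paragraph therefore do not reflect a calculation that would actually close the argument.
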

This shows that the entangled value of $G^n$ must decay at a polynomial rate with $n$. The full Quantum Parallel Repetition Conjecture states that the rate of decay is in fact exponential, and this remains an important open problem.

\subsection{Previous work} 

There has been extensive work on the parallel repetition of entangled games. As stated earlier, past results have applied to various special classes of games, but there was no result that covered \emph{all} games. 

The results coming closest to the Quantum Parallel Repetition Conjecture are the work of Kempe and Vidick~\cite{kempe2011parallel} and Bavarian, Vidick, and Yuen~\cite{BVY15anchoring,BVY15fort}. Rather than proving parallel repetition theorems for general games, these works prove general \emph{gap amplification} theorems, which are closely related. Instead of showing that for games $G$ where $\eval(G) < 1$ that $\eval(G^n)$ goes to $0$ with $n$, the game $G$ is first converted to another game $H$ where analyzing $\eval(H^n)$ is much more tractable. Gap amplification is a technique used in complexity theory and cryptography to amplify the difference between two cases of a problem (usually called the \emph{completeness} and \emph{soundness} cases).

Kempe and Vidick showed that given an arbitrary game $G$, one can efficiently transform it to another game $H$ with the following properties: if the \emph{classical} value of $G$ is $1$ (meaning that there is a perfect deterministic strategy), then $\val(H^n) = 1$ (and thus $\eval(H^n) = 1$). If the \emph{entangled} value of $G$ is less than $1$, then the entangled value of $H^n$ decays at a polynomial rate $n^{-\Omega(1)}$. In this tranformed game $H$, in addition to playing the game $G$, the referee will randomly choose to ask ``consistency'' questions to check that the players give the same answers on the same questions\footnote{This transformation is to due to Feige and Kilian~\cite{feige2000two}, who proved a similar result for classical games.}. Thus~\cite{kempe2011parallel} prove gap amplification for general games -- with a caveat. Because of the random consistency checks in the game $H$, the ``quantum completeness'' is not preserved: even if $\eval(G) = 1$, it is not necessarily the case that $\eval(H) = 1$. 

More recently, Bavarian, Vidick, and Yuen~\cite{BVY15anchoring,BVY15fort} gave better gap amplification results for entangled games\footnote{They also obtain general gap amplification results for games with more than two players.}. They showed that for general games $G$, one can apply a simple transformation to obtain another game $H$ with the following properties:
\medskip
\begin{enumerate}
	\item If $\eval(G) = 1$, then $\eval(H^n) = 1$.
	\item If $\eval(G) < 1$, then $\eval(H^n) \leq \exp(-\Omega(n))$.
\end{enumerate}
\medskip
Note that the transformation from $G$ to $H$ preserves quantum completeness, and that when $\eval(G) < 1$, the entangled value of the repeated game decays \emph{exponentially}. Like~\cite{kempe2011parallel}, the transformations of~\cite{BVY15anchoring,BVY15fort} construct $H$ by adding auxiliary questions to the game $G$. The transformation given in~\cite{BVY15anchoring} is called \emph{anchoring}, and the trasformation in~\cite{BVY15fort} is called \emph{fortification}. The latter transformation gives a quantum generalization of the fortification technique of~\cite{moshkovitz2014parallel} for \emph{classical games}. The quantitative aspects of repeated anchored games are different from those of fortified games, but both yield general gap amplification theorems for entangled games.

The results of Bavarian, Vidick and Yuen show that, while we do not know if the Quantum Parallel Repetition Conjecture holds for all games $G$, we \emph{do} know that it holds for a class of games that effectively captures the general case, in fact with exponential decay similar to Raz's theorem. Since the main application of parallel repetition in complexity theory and quantum information is gap amplification, the results of~\cite{BVY15anchoring,BVY15fort} effectively settle the Quantum Parallel Repetition Conjecture -- as far as applications are concerned. 

But as a scientific question, the original Quantum Parallel Repetition Conjecture is a fundamental and basic problem about the power of entanglement in games. Prior to this work, one might have wondered whether there exists a game $G$ such that $\eval(G) < 1$, but there is some constant $\delta$ such that for infinitely many $n$ there is a nefarious entangled strategy for $G^n$ with success probability at least $\delta$? Here we prove that this cannot happen.

\subsection{Proof overview}

Theorem~\ref{thm:main} is proved via reduction: if $\eval(G^n)$ is too large, then from an optimal entangled strategy for $G^n$ we can construct an entangled strategy for the single-shot game $G$ that wins with probability strictly greater than $\eval(G)$, which would be a contradiction. 

In more detail, suppose that $\eval(G) = 1 - \eps$. If the success probability of the players in $G^n$ is dramatically larger than our target bound (which in our case is $\sim n^{-O(1)}$), then we can identify a set of coordinates $C \subseteq [n]$ that is not too large, but has the property that for a uniformly random coordinate $i \in [n] - C$, 
\begin{equation}
	\Pr(\text{Win game $i$ } | \text{Win games in $C$}) > 1 - \eps/2
	\label{eq:embed}
\end{equation}
where here the probability is both over the randomness of the questions in $G^n$, the randomness of the players' entangled strategy, and the randomly chosen index $i$. Thus it would be advantageous if Alice and Bob could play the single-shot game $G$ by ``embedding'' it in a randomly chosen $i$th coordinate of $G^n$, and playing $G^n$ \emph{conditioned} on the event that the games indexed by $C$ have been won. If they could do this, then by~\eqref{eq:embed}, the probability they win the $i$th coordinate of $G^n$, and hence the original game $G$, is at least $1 - \eps/2 > \eval(G)$, which would be a contradiction.

If the players are classical (i.e. use deterministic strategies), this embedding is performed in the following way. Alice and Bob are first given questions $(X_i,Y_i)$ for the $i$'th game. Based on their received question, Alice and Bob jointly sample a \emph{dependency-breaking variable} $R$. The essential features of this dependency-breaking variable are:
\medskip
\begin{enumerate}
	\item \textbf{Usefulness}\footnote{We will let $\P$ denote the probability distribution that describes the joint distribution of the random variables relevant in an execution of the strategy for $G^n$, including the players' questions $X_1,\ldots,X_n,Y_1,\ldots,Y_n$, the players' answers $A_1,\ldots,A_n,B_1,\ldots,B_n$, and the dependency-breaking variable $R$.
}: $\P_{A_i B_i | R X_i Y_i W_C} = \P_{A_i | R X_i W_C} \cdot \P_{B_i | R Y_i W_C}$
	\item \textbf{Sampleability}: $\P_{R | X_i Y_i W} \approx \P_{R | X_i W_C} \approx \P_{R | Y_i W_C}$
\end{enumerate}
\medskip
where  ``$\approx$'' means closeness in statistical distance. Here, $W_C$ denotes the event that the players win all the games in $C$. $\P_{A_i B_i | R X_i Y_i W_C}$ denotes the probability distribution of Alice's and Bob's answers in the $i$th coordinate when playing $G^n$, conditioned on the dependency-breaking variable $R$, their received questions for the $i$th game $(X_i, Y_i)$, \emph{and} the event $W_C$. The ``Usefulness property'' states that, the players' answers in the $i$th round are independent of each other, conditioned on $R$, their own questions, and $W_C$. Thus, given $R$ distributed according to $\P_{R | X_i Y_i W_C}$, Alice can sample $A_i$ on her own, because she possesses $R$ and $X_i$, and similarly Bob can sample $B_i$ on his own, because he possesses knowledge of $R$ and $Y_i$. By~\eqref{eq:embed}, the probability that $V(X_i,Y_i,A_i,B_i) = 1$ will be strictly greater than $\eval(G)$, wherein we would arrive at a contradiction.

As the name suggests, the ``sampleability property'' implies that Alice and Bob can (approximately) jointly sample the variable $R$. Even though the distribution $\P_{R | X_i Y_i W_C}$ may depend on both players' questions, the sampleability property shows $R$, up to some error, only depends on $X_i$ or $Y_i$, but not both. Using the \emph{correlated sampling procedure} of~\cite{Hol09}, Alice and Bob can jointly sample $R$ from $\P_{R | X_i Y_i W}$ with high probability.

At a high level, the proof of our quantum parallel repetition theorem is similar. However instead of sampling a dependency-breaking variable $R$, the players will need to sample a \emph{dependency-breaking state}. It is an entangled state $\ket{\Psi_{x_iy_i}}$ that depends on both Alice's and Bob's questions $(x_i,y_i)$, and satisfies similar Usefulness and Sampleability properties:
\medskip
\begin{enumerate}
	\item \textbf{Usefulness}: The distribution of measurement outcomes by making local measurements on $\ket{\Psi_{X_iY_i}}$ is equal to $\P_{A_i B_i | X_i Y_i W_C}$. 
	\item \textbf{Sampleability}: There exist states $\ket{\Phi_{X_i}}$ and $\ket{\Gamma_{Y_i}}$ such that $\ket{\Psi_{X_iY_i}} \approx \ket{\Phi_{X_i}} \approx \ket{\Gamma_{Y_i}}$
\end{enumerate}
\medskip
where ``$\approx$'' means closeness in $\ell_2$ distance, and the statements hold on average over $X_i Y_i$.

The Usefulness property states that if on input $(x_i,y_i)$, Alice and Bob were to share the entangled state $\ket{\Psi_{x_iy_i}}$, then they could make local measurements to obtain outcomes distributed according to $\P_{A_i B_i | X_i Y_i W_C}$, which would mean that their success probability would be $\Pr(\text{Win $i$ }  | \text{Win $C$})$, which is greater than $\eval(G)$, an impossibility.

The Sampleability property implies that on input $(x_i,y_i)$ Alice and Bob are actually able to approximately prepare the state $\ket{\Psi_{x_iy_i}}$. This is because of the \emph{quantum correlated sampling procedure} of Dinur, Steurer, and Vidick, who used it to prove a parallel repetition theorem for entangled projection games~\cite{DinurSV14}. It is entirely analogous to Holenstein's correlated sampling procedure: Alice has a description of a state $\ket{\Phi_{X_i}}$ that's close to $\ket{\Psi_{X_i Y_i}}$, and Bob has a description of a state $\ket{\Gamma_{Y_i}}$ that is also close to $\ket{\Psi_{X_i Y_i}}$. Via local transformations on preshared quantum entanglement, Alice and Bob can generate an approximation of $\ket{\Psi_{X_i Y_i}}$. Combined with the Usefulness property, Alice and Bob are then able to win the $i$th game with too high probability.

It is not difficult to define states that satisfy the Usefulness property. Consider an execution of the entangled strategy for $G^n$. In the beginning, the players share some entangled state $\ket{\psi}$, and upon obtaining questions $(x_1,\ldots,x_n)$ and $(y_1,\ldots,y_n)$, the players apply local measurements depending on these questions to $\ket{\psi}$ to obtain answer tuples $(a_1,\ldots,a_n)$ and $(b_1,\ldots,b_n)$. One can define an ensemble of states $\{\ket{\Psi_{x_i,y_i}}\}$ that are, roughly speaking, derived from the post-measurement state of the players \emph{conditioned} on the players having won all the games in $C$ (that is, conditioned on the event $W_C$), \emph{and} having received a specific question pair $(x_i,y_i)$ in the $i$'th coordinate. Such an ensemble of states would satisfy the Usefulness property.


However, the primary challenge is achieving Sampleability property, that is, to show the states $\ket{\Psi_{x_i,y_i}}$ only depend on one player's question, but not both. One major obstacle to proving the Sampleability property is the following: in the players' strategy for $G^n$, Bob (say) may elect to ``print'' his entire vector of questions $(y_1,\ldots,y_n)$ into the entangled state $\ket{\psi}$. He can do this by applying a local unitary operation controlled on his questions on some ancilla qubits in $\ket{\psi}$. We cannot say he does not do this, because the shared entangled state $\ket{\psi}$ and the players' measurements are completely arbitrary. But this implies that we cannot hope to prove that the post-measurement state is independent of $y_i$, conditioned on $x_i$. 

Despite such barriers, we are able to define the $\ket{\Psi_{x_i,y_i}}$ in such a way that removes such adversarial dependencies on the players' questions. Assuming (for contradiction) that the players' probability of success is at least $n^{-O(1)}$, then we are able to prove that these states satisfy the Sampleability property. We build upon many previous works: we use the information theoretic framework of~\cite{chailloux2014parallel, JainPY14}, carefully combined with the operator analysis techniques from~\cite{DinurSV14}. The definition of the dependency-breaking states $\ket{\Psi_{x_i,y_i}}$ includes the classical dependency-breaking variables of~\cite{Hol09} used to prove Raz's parallel repetition theorem. Our final constructed strategy for the single-shot game $G$ uses both classical and quantum correlated sampling procedures.

\section{Preliminaries}\label{sec:prelim}
\subsection{Probability distributions}\label{subsec:prob_dist}
We largely adopt the notational conventions from~\cite{Hol09} for probability distributions. We let capital letters denote random variables and lower case letters denote specific samples. We will use subscripted sets to denote tuples, e.g., $X_{[n]} := (X_1,\ldots,X_n)$, $x_{[n]} = (x_1,\ldots,x_n)$, and if $C \subset [n]$ is some subset then $X_C$ will denote the sub-tuple of $X_{[n]}$ indexed by $C$. We use $\P_X$ to denote the probability distribution of random variable $X$, and $\P_X(x)$ to denote the probability that $X = x$ for some value $x$. For multiple random variables, e.g., $X, Y, Z$, $\P_{XYZ}(x,y,z)$  denotes their joint distribution with respect to some probability space understood from context. 

We use $\P_{Y | X = x}(y)$ to denote the conditional distribution $\P_{YX}(y,x)/\P_X(x)$, which is defined when $\P_X(x) > 0$. When conditioning on many variables, we usually use the shorthand $\P_{X | y,z}$ to denote the distribution $\P_{X | Y =y,Z=z}$. For example, we write $\P_{V | \omega_\mi, x_i, y_i}$ to denote $\P_{V | \Omega_\mi = \omega_\mi, X_i = x_i, Y_i = y_i}$. For an event $W$ we let $\P_{X Y | W}$ denote the distribution conditioned on $W$. We use the notation $\Ex_{X} f(x)$ and $\Ex_{\P_X} f(x)$ to denote the expectation $\sum_{x} \P_X(x) f(x)$. 

Let $\P_{X_0}$ be a distribution of $\X$, and for every $x$ in the support of $\P_{X_0}$, let $\P_{Y | X_1 = x}$ be a conditional distribution defined over $\Y$. We define the distribution $\P_{X_0} \P_{Y | X_1}$ over $\X \times \Y$ as
$$
	(\P_{X_0} \P_{Y | X_1})(x,y) \,:=\, \P_{X_0}(x) \cdot \P_{Y | X_1 = x}(y).
$$
Additionally, we write $\P_{X_0 Z} \P_{Y | X_1}$ to denote the distribution $(\P_{X_0 Z} \P_{Y | X_1})(x,z,y) := \P_{X_0 Z}(x,z) \cdot \P_{Y | X_1 = x}(y)$.

For two random variables $X_0$ and $X_1$ over the same set $\X$, we use
$$\| \P_{X_0} - \P_{X_1} \| \,:=\, \frac{1}{2}\sum_{x \in \X} |\P_{X_0}(x) - \P_{X_1} (x)|,$$
to denote the total variation distance between $\P_{X_0}$ and $\P_{X_1}$.

\subsection{Quantum information theory}

For comprehensive references on quantum information we refer the reader to~\cite{nielsen2010quantum,wilde2013quantum}. 

For a vector $\ket{\psi}$, we use $\| \ket{\psi} \|$ to denote its Euclidean length. For a matrix $A$, we will use $\| A \|_1$ to denote its \emph{trace norm} $\Tr(\sqrt{A A^\dagger})$, and $\|A\|_F$ to denote its \emph{Frobenius norm} $\sqrt{\Tr(A A^\dagger)}$. A density matrix is a positive semidefinite matrix with trace $1$. The \emph{fidelity} between two density matrices $\rho$ and $\sigma$ is defined as $F(\rho,\sigma) = \| \sqrt{\rho} \sqrt{\sigma} \|_1$. For Hermitian matrices $A, B$ we write $A \preceq B$ to indicate that $A - B$ is positive semidefinite. We use $\Id$ to denote the identity matrix. A \emph{positive operator valued measurement} (POVM) with outcome set $\A$ is a set of positive semidefinite matrices $\{ E^a \}$ labeled by $a\in \A$ that sum to the identity.

We will use the convention that, when $\ket{\psi}$ is a pure state, $\psi$ refers to the rank-1 density matrix $\ketbra{\psi}{\psi}$. We use subscripts to denote system labels; so $\rho_{AB}$ will denote the density matrix on the systems $A$ and $B$. A \emph{classical-quantum} state (or simply \emph{cq-state}) $\rho_{XE}$ is classical on $X$ and quantum on $E$ if it can be written as $\rho_{XE} = \sum_{x} p(x) \ketbra{x}{x}_X \otimes \rho_{E | X = x}$ for some probability measure $p(\cdot)$. The state $\rho_{E | X = x}$ is by definition the $E$ part of the state $\rho_{XE}$, conditioned on the classical register $X = x$. We write $\rho_{XE | X = x}$ to denote the state $\ketbra{x}{x}_X \otimes \rho_{E | X = x}$. We often write expressions such as $\rho_{E|x}$ as shorthand for $\rho_{E|X = x}$ when it is clear from context which registers are being conditioned on. This will be useful when there are many classical variables to be conditioned on. 

The Fuchs-van de Graaf inequalities relate fidelity and trace norm as
\begin{equation}\label{eq:fuchs-graaf}
1 - F(\rho,\sigma) \leq \frac{1}{2} \| \rho - \sigma \|_1 \leq \sqrt{1 - F(\rho,\sigma)^2}.
\end{equation}
When dealing with pure states, we can tighten the relationship between the trace norm and the Euclidean distance:
\begin{fact}
\label{fact:trace_vs_euclid}
For pure states $\ket{v}$ and $\ket{w}$, $ \left \| \, \ketbra{v}{v} - \ketbra{w}{w} \, \right \|_1 \leq 2 \left \| \,  \ket{v} - \ket{w} \,  \right \|$.
\end{fact}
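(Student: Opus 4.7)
The plan is to bound $\| \ketbra{v}{v} - \ketbra{w}{w} \|_1$ by splitting the difference into two rank-one pieces whose trace norms can be read off directly. Concretely, I would add and subtract $\ket{v}\bra{w}$ to get
$$\ketbra{v}{v} - \ketbra{w}{w} \;=\; \ket{v}\bigl(\bra{v} - \bra{w}\bigr) \;+\; \bigl(\ket{v} - \ket{w}\bigr)\bra{w},$$
and then apply the triangle inequality for $\|\cdot\|_1$ to the two summands separately.

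The key auxiliary observation is that any rank-one operator $\ket{a}\bra{b}$ has trace norm equal to $\|\ket{a}\| \cdot \|\ket{b}\|$, since its only nonzero singular value is $\|\ket{a}\|\cdot\|\ket{b}\|$ (this is immediate from $\Tr\sqrt{\ket{a}\bra{b}\cdot\ket{b}\bra{a}} = \|\ket{b}\|\cdot\|\ket{a}\|$). Applying this to each summand and using that $\|\ket{v}\| = \|\ket{w}\| = 1$ because $\ket{v},\ket{w}$ are pure states, I get $\| \ket{v}(\bra{v}-\bra{w}) \|_1 = \|\ket{v} - \ket{w}\|$ and $\| (\ket{v}-\ket{w})\bra{w} \|_1 = \|\ket{v}-\ket{w}\|$. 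Summing through the triangle inequality yields the desired bound $2\|\ket{v}-\ket{w}\|$.

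There is essentially no obstacle here; the proof is one line of algebra plus a standard fact about rank-one trace norms. The only point requiring mild care is to remember that the bound implicitly uses the unit-norm convention for pure states; without it, one picks up factors of $\|\ket{v}\|$ and $\|\ket{w}\|$. An alternative derivation goes through the closed form $\| \ketbra{v}{v} - \ketbra{w}{w} \|_1 = 2\sqrt{1 - |\langle v|w\rangle|^2}$ combined with $\|\ket{v}-\ket{w}\|^2 = 2 - 2\,\mathrm{Re}\langle v|w\rangle \geq 2(1 - |\langle v|w\rangle|)$, but I prefer the decomposition route since it is more elementary and avoids diagonalizing the $2$-dimensional subspace spanned by $\ket{v}$ and $\ket{w}$.
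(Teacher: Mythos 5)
Your proof is correct: the decomposition $\ketbra{v}{v} - \ketbra{w}{w} = \ket{v}(\bra{v}-\bra{w}) + (\ket{v}-\ket{w})\bra{w}$, the triangle inequality, and the rank-one identity $\|\ket{a}\bra{b}\|_1 = \|\ket{a}\|\cdot\|\ket{b}\|$ together give exactly the stated bound, with the unit-norm assumption used where you say it is. The paper states this fact without proof, and your argument is the standard elementary one that would be supplied, so there is nothing to reconcile.
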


\noindent \textbf{Ando's Identity.} For any symmetric pure state $\ket{\psi} = \sum_j \sqrt{\lambda_j} \ket{v_j} \ket{v_j}$ for an orthonormal basis $\{\ket{v_j}\}$ and arbitrary linear operators $X, Y$, we have
	$$
		\bra{\psi} X \otimes Y \ket{\psi} = \Tr(X \sqrt{\rho} Y^\top \sqrt{\rho}),
	$$
	where $\rho = \sum \lambda_j \ket{v_j}\bra{v_j}$ is the reduced density matrix of $\ket{\psi}$ on either subsystem and the transpose is taken with respect to the basis $\{\ket{v_j}\}$.

\medskip
\noindent \textbf{Information theoretic quantities.} For two positive semidefinite operators $\rho$, $\sigma$, the \emph{relative entropy} $S(\rho \| \sigma)$ is defined to be $\Tr(\rho (\log \rho - \log \sigma))$. The \emph{relative min-entropy} $S_\infty(\rho \| \sigma)$ is defined as $\min\{ \lambda : \rho \preceq 2^\lambda \sigma \}$.   

Let $\rho_{AB}$ be a bipartite state. The mutual information $I(A:B)_\rho$ is defined as $S(\rho_{AB} \| \rho_A \otimes \rho_B)$. For a classical-quantum state $\rho_{XAB}$ that is classical on $X$ and quantum on $AB$, we write $I(A ; B | x)_{\rho}$ to indicate $I(A ; B)_{\rho_x}$. 

\begin{fact}
\label{fact:rel_inf}
For all states $\rho_{AB}$, $\sigma_A$, and $\tau_B$, we have
$$
	S(\rho_{AB} \| \sigma_A \otimes \tau_B) \geq S(\rho_{AB} \| \rho_A \otimes \rho_B) = I(A ; B)_\rho.
$$
\end{fact}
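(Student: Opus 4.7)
The plan is short: this is a standard ``Pythagorean-type'' identity for quantum relative entropy against a product reference, and the equality on the right-hand side is simply the definition of mutual information from the preliminaries, so only the inequality requires real argument.

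To prove the inequality, I would unpack both relative entropies via $S(\rho \| \sigma) = -S(\rho) - \Tr(\rho \log \sigma)$, valid whenever $\supp(\rho) \subseteq \supp(\sigma)$ (if the support condition fails for $\sigma_A \otimes \tau_B$, the left-hand side is $+\infty$ by convention and the claim is trivial). The key algebraic fact is that $\log(\sigma_A \otimes \tau_B) = \log \sigma_A \otimes \Id_B + \Id_A \otimes \log \tau_B$, since $\sigma_A \otimes \Id_B$ and $\Id_A \otimes \tau_B$ commute and exponentiate to $\sigma_A \otimes \tau_B$. Combined with the partial-trace identity $\Tr(\rho_{AB}(M_A \otimes \Id_B)) = \Tr(\rho_A M_A)$ and its $B$-side analogue, this gives
$$S(\rho_{AB} \| \sigma_A \otimes \tau_B) \,=\, -S(\rho_{AB}) - \Tr(\rho_A \log \sigma_A) - \Tr(\rho_B \log \tau_B).$$

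Specializing this identity to $(\sigma_A,\tau_B) = (\rho_A,\rho_B)$ and subtracting, the $-S(\rho_{AB})$ terms cancel and the difference rearranges as
$$S(\rho_{AB} \| \sigma_A \otimes \tau_B) - S(\rho_{AB} \| \rho_A \otimes \rho_B) \,=\, S(\rho_A \| \sigma_A) + S(\rho_B \| \tau_B),$$
which is nonnegative by Klein's inequality (nonnegativity of quantum relative entropy). There is no real obstacle here; the only point to watch is the support condition underlying the definition of $S(\cdot\|\cdot)$, handled by the convention noted above.
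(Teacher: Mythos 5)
Your proof is correct. The paper states Fact~\ref{fact:rel_inf} without proof (it is a standard fact), and your argument is exactly the standard one it implicitly relies on: the decomposition $S(\rho_{AB} \| \sigma_A \otimes \tau_B) = I(A;B)_\rho + S(\rho_A \| \sigma_A) + S(\rho_B \| \tau_B)$ obtained from $\log(\sigma_A \otimes \tau_B) = \log \sigma_A \otimes \Id + \Id \otimes \log \tau_B$, plus nonnegativity of relative entropy, with the equality on the right being the paper's definition of $I(A;B)_\rho$; your handling of the support/infinite case is also fine.
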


\begin{fact}[Pinsker's inequality]
\label{fact:pinsker}
	For all density matrices $\rho, \sigma$, $ \frac{1}{2} \| \rho - \sigma \|^2_1 \leq S(\rho \| \sigma)$.
\end{fact}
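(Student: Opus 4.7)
The plan is to reduce the quantum statement to classical Pinsker's inequality via the data processing inequality for relative entropy. The key observation is that one can find a two-outcome measurement that preserves the trace distance between $\rho$ and $\sigma$ exactly, while, by monotonicity of relative entropy under quantum channels, can only decrease $S(\rho\|\sigma)$.

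First I would let $\rho - \sigma = \Delta_+ - \Delta_-$ be the Jordan decomposition into its positive and negative parts (with orthogonal supports), and let $\Pi$ be the projector onto the support of $\Delta_+$. Then $\|\rho - \sigma\|_1 = \Tr\Delta_+ + \Tr\Delta_- = \Tr(\Pi(\rho-\sigma)) - \Tr((\Id-\Pi)(\rho-\sigma))$. Defining the binary outcome distributions $p = (\Tr(\Pi\rho), \Tr((\Id-\Pi)\rho))$ and $q = (\Tr(\Pi\sigma), \Tr((\Id-\Pi)\sigma))$, this gives $\|p - q\|_1 = \|\rho - \sigma\|_1$. In other words, the Helstrom measurement saturates the variational formula for trace distance.

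Second I would invoke monotonicity of relative entropy under the quantum-to-classical channel $\mathcal{E}(\tau) = \Tr(\Pi\tau)\ketbra{0}{0} + \Tr((\Id-\Pi)\tau)\ketbra{1}{1}$. Since $\mathcal{E}$ is completely positive and trace preserving, $S(\mathcal{E}(\rho) \| \mathcal{E}(\sigma)) \leq S(\rho\|\sigma)$, and the left-hand side is exactly the classical relative entropy $S(p\|q)$. Combined with the classical Pinsker inequality $\frac{1}{2}\|p - q\|_1^2 \leq S(p\|q)$ applied to the binary distributions $p,q$, we obtain
\[
\tfrac{1}{2}\|\rho-\sigma\|_1^2 \;=\; \tfrac{1}{2}\|p-q\|_1^2 \;\leq\; S(p\|q) \;\leq\; S(\rho\|\sigma),
\]
which is the desired inequality.

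The main obstacle is really housed in the two standard inputs invoked above: the Lindblad--Uhlmann monotonicity of quantum relative entropy under CPTP maps, and the classical Pinsker inequality. Monotonicity is a deep result that I would cite from the standard references (e.g.\ Nielsen--Chuang or Wilde), while classical Pinsker for two-outcome distributions reduces to verifying the scalar inequality $2(t-s)^2 \leq t\log(t/s) + (1-t)\log((1-t)/(1-s))$, which follows from a short Taylor expansion of the right-hand side around $t = s$. Given these inputs, the reduction via the Helstrom measurement is essentially the only idea required, and the rest is bookkeeping.
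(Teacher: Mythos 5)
Your proposal is correct. Note that the paper does not prove this statement at all: it records quantum Pinsker as a standard fact (alongside Fuchs--van de Graaf and the relative-entropy facts) and uses it as a black box, so there is no internal proof to compare against. Your reduction is the standard textbook argument: take the Helstrom projector $\Pi$ onto the positive part of $\rho-\sigma$, observe that the induced two-outcome distributions $p,q$ satisfy $\|p-q\|_1=\|\rho-\sigma\|_1$, apply Lindblad--Uhlmann monotonicity of relative entropy under the measurement channel to get $S(p\|q)\leq S(\rho\|\sigma)$, and finish with binary classical Pinsker. All steps are sound, and citing monotonicity rather than reproving it is appropriate, since it is the genuinely deep ingredient. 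One small quibble: the binary inequality $2(t-s)^2\leq t\log(t/s)+(1-t)\log\bigl((1-t)/(1-s)\bigr)$ is not established by a Taylor expansion around $t=s$ alone, since that only gives a local statement; the standard argument fixes $t$ and shows the difference of the two sides is monotone in $s$ on either side of $t$ (its derivative in $s$ is $(t-s)\bigl(\tfrac{1}{s(1-s)}-4\bigr)$ up to sign, and $s(1-s)\leq 1/4$), which yields the global bound. Also, since the paper's $S(\cdot\|\cdot)$ is presumably in bits, the inequality only gets easier, as the base-2 relative entropy dominates the natural-log one; your constant is therefore safe.
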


\begin{lemma}[\cite{JainPY14}, Fact II.8]
\label{lem:divergence_chain_rule}
	Let $\rho = \sum_z \P_Z(z) \ketbra{z}{z} \otimes \rho_z$, and $\rho' = \sum_z \P_{Z'}(z) \ketbra{z}{z} \otimes \rho'_z$. Then $S(\rho' \| \rho) = S(\P_{Z'} \| \P_Z) + \Ex_{Z'} \left [ S(\rho'_z \| \rho_z) \right]$. In particular, $S(\rho' \| \rho) \geq \Ex_{Z'} \left [ S(\rho'_z \| \rho_z) \right]$.
\end{lemma}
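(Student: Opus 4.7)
The plan is to give a direct computation using the block-diagonal structure of $\rho$ and $\rho'$ with respect to the classical register. First I would observe that because the states $\{\ket{z}\}$ form an orthonormal set, the operators $\rho$ and $\rho'$ are block-diagonal in this basis, which means their logarithms can be computed block by block:
\begin{equation*}
    \log \rho = \sum_z \ketbra{z}{z} \otimes \bigl( \log \P_Z(z) \cdot \Id + \log \rho_z \bigr), \qquad \log \rho' = \sum_z \ketbra{z}{z} \otimes \bigl( \log \P_{Z'}(z) \cdot \Id + \log \rho'_z \bigr).
\end{equation*}
This step is the key observation; once we have it, the rest is really just bookkeeping.

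Next, I would expand $S(\rho' \| \rho) = \Tr(\rho' \log \rho') - \Tr(\rho' \log \rho)$ using these expressions. Since $\rho'$ is also block-diagonal and the blocks of $\log \rho$ and $\log \rho'$ act only within the corresponding $z$-block of $\rho'$, the trace splits as a sum over $z$. Concretely,
\begin{align*}
    \Tr(\rho' \log \rho') &= \sum_z \P_{Z'}(z) \log \P_{Z'}(z) + \sum_z \P_{Z'}(z) \, \Tr(\rho'_z \log \rho'_z), \\
    \Tr(\rho' \log \rho) &= \sum_z \P_{Z'}(z) \log \P_Z(z) + \sum_z \P_{Z'}(z) \, \Tr(\rho'_z \log \rho_z),
\end{align*}
using that $\Tr(\rho'_z) = 1$ to dispose of the $\log \P(z) \cdot \Id$ pieces. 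Subtracting and regrouping yields
\begin{equation*}
    S(\rho' \| \rho) = \sum_z \P_{Z'}(z) \log \frac{\P_{Z'}(z)}{\P_Z(z)} + \sum_z \P_{Z'}(z) \, S(\rho'_z \| \rho_z) = S(\P_{Z'} \| \P_Z) + \Ex_{Z'}\!\left[ S(\rho'_z \| \rho_z) \right],
\end{equation*}
which is the claimed identity. The ``in particular'' inequality follows immediately from non-negativity of classical relative entropy, $S(\P_{Z'} \| \P_Z) \geq 0$.

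There is no real obstacle here; the only subtlety worth being careful about is justifying the formula for $\log \rho$ when some of the $\P_Z(z)$ or eigenvalues of $\rho_z$ vanish. The standard convention is that $0 \log 0 = 0$ and that $S(\rho' \| \rho) = +\infty$ whenever $\supp(\rho') \not\subseteq \supp(\rho)$. Under this convention, if $\P_{Z'}(z) > 0$ but $\P_Z(z) = 0$ for some $z$, then both sides of the identity equal $+\infty$, and otherwise the computation above goes through verbatim on the support of $\rho$. I would mention this briefly and then the proof is complete.
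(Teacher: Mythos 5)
Your proof is correct: the block-diagonal structure gives $\log \rho = \sum_z \ketbra{z}{z} \otimes \bigl(\log \P_Z(z)\,\Id + \log \rho_z\bigr)$ on the support, the trace then splits over $z$, and the support convention you mention handles the degenerate cases. The paper itself does not prove this lemma but imports it from~\cite{JainPY14} (Fact II.8); your computation is exactly the standard argument behind that fact, so there is nothing to reconcile.
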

We will also use the following Lemma from \cite{chung2015parallel,BVY15anchoring}.

\begin{lemma}[\cite{chung2015parallel,BVY15anchoring}, Quantum Raz's Lemma] \label{lem:quantum_raz} 
Let $\rho$ and $\sigma$ be two cq-states with  $\rho_{XA}= \rho_{X_1 X_2 \ldots X_n A}$ and $\sigma= \sigma_{XA}= \sigma_{X_1}\otimes \sigma_{X_2}\otimes \ldots \otimes \sigma_{X_n} \otimes \sigma_A$ with $X=X_1 X_2 \ldots X_n$ classical in both states. Then
\begin{equation}\label{eqn:Raz_lemma1} \sum_{i=1}^n I(X_i \, :\, A)_\rho \leq S(\rho_{XA} \, \| \sigma_{XA}). \end{equation}
\end{lemma}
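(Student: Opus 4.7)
My plan is to reduce the claim, via two elementary observations, to an iterated form of strong subadditivity of von Neumann entropy.

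First I would replace the arbitrary product reference state $\sigma = \sigma_{X_1} \otimes \cdots \otimes \sigma_{X_n} \otimes \sigma_A$ by the product of $\rho$'s own marginals $\rho_{X_1} \otimes \cdots \otimes \rho_{X_n} \otimes \rho_A$. This reduction is valid because Gibbs' inequality (equivalently, iterating Fact~\ref{fact:rel_inf} one tensor factor at a time) implies $S(\rho \, \| \, \tau_1 \otimes \cdots \otimes \tau_k) \geq S(\rho \, \| \, \rho_1 \otimes \cdots \otimes \rho_k)$ for any product reference state, where $\rho_j$ denotes the marginal of $\rho$ on the $j$-th tensor factor. It therefore suffices to prove the lemma for $\sigma = \rho_{X_1} \otimes \cdots \otimes \rho_{X_n} \otimes \rho_A$.

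Next I would expand both sides in terms of von Neumann entropies. Writing $H(Y)_\rho := -\Tr(\rho_Y \log \rho_Y)$ and using that $X$ is classical (so that $-\Tr(\rho_{XA} \log \rho_{X_i}) = H(X_i)_\rho$), a short calculation gives
$$
S\bigl(\rho_{XA} \,\big\|\, \rho_{X_1}\otimes \cdots \otimes \rho_{X_n} \otimes \rho_A\bigr) \;=\; \sum_{i=1}^n H(X_i)_\rho + H(A)_\rho - H(XA)_\rho,
$$
while $\sum_i I(X_i:A)_\rho = \sum_i H(X_i)_\rho + n\, H(A)_\rho - \sum_i H(X_i A)_\rho$. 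After the $H(X_i)_\rho$ terms cancel, the inequality to prove collapses to the purely entropic statement
$$
\sum_{i=1}^n H(X_i A)_\rho \;\geq\; H(X_1\cdots X_n A)_\rho + (n-1)\, H(A)_\rho.
$$

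Finally I would prove this last inequality by induction on $n$. The base case $n=1$ is trivial. For the inductive step, strong subadditivity applied to the tripartition $(X_1 \cdots X_{n-1},\, X_n,\, A)$ gives
$$
H(X_1 \cdots X_n A) + H(A) \;\leq\; H(X_1 \cdots X_{n-1} A) + H(X_n A),
$$
and adding this to the inductive hypothesis at $n-1$ immediately yields the bound at $n$. The only substantive ingredient in the whole argument is strong subadditivity of von Neumann entropy; everything else is Gibbs' inequality and clean algebra, so I do not expect a genuine obstacle beyond recognizing that this iterated-SSA reformulation is the right target.
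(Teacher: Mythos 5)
Your proof is correct, but note that the paper itself does not prove Lemma~\ref{lem:quantum_raz} at all --- it imports it from \cite{chung2015parallel,BVY15anchoring} --- so the comparison is with the argument in those references. There the lemma is obtained by splitting off $S(\rho_X\|\sigma_{X_1}\otimes\cdots\otimes\sigma_{X_n})$ and $I(X:A)_\rho$ from the relative entropy and then using the chain rule $I(X:A)=\sum_i I(X_i:A\,|\,X_{<i})$ together with data processing, $I(X_i:A X_{<i})\geq I(X_i:A)$; your route instead reduces to the reference state $\rho_{X_1}\otimes\cdots\otimes\rho_{X_n}\otimes\rho_A$, expands everything in von Neumann entropies, and lands on the iterated strong-subadditivity inequality $\sum_i H(X_iA)\geq H(XA)+(n-1)H(A)$. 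The two arguments are equivalent in substance (SSA underlies both), but yours is more self-contained and, as a bonus, never uses the classicality of $X$ (indeed the identity $-\Tr(\rho_{XA}\log\rho_{X_i})=H(X_i)_\rho$ you attribute to classicality is just a partial-trace identity, valid for any state), so it proves a slightly more general statement. One small caveat: your parenthetical claim that the reduction step follows by ``iterating Fact~\ref{fact:rel_inf} one tensor factor at a time'' does not quite work as stated --- applying the bipartite fact across a cut leaves a joint marginal such as $\rho_{X_1X_2}$ in the reference, and replacing it by $\rho_{X_1}\otimes\rho_{X_2}$ moves the relative entropy in the wrong direction by subadditivity. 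The reduction is nonetheless valid for the reason you give first (Gibbs/Klein): the difference $S(\rho\,\|\,\tau_1\otimes\cdots\otimes\tau_k)-S(\rho\,\|\,\rho_1\otimes\cdots\otimes\rho_k)=\sum_j S(\rho_j\|\tau_j)\geq 0$ by a one-line computation, so there is no gap.
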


\subsection{Classical and quantum correlated sampling}

\emph{Correlated sampling} is a key component of Holenstein's proof of the classical parallel repetition theorem.

\begin{lemma}[Classical correlated sampling~\cite{Hol09}]
\label{lem:ccorsamp}
	Let $\P$ and $\Q$ be two probability distributions over a universe $\mathcal{U}$ such that $\| \P - \Q \|_1 \leq \eps < 1$. Then there exists a zero communication two-player protocol using shared randomness where the first player outputs an element $p \in \mathcal{U}$ distributed according to $\P$, the second player samples an element $q \in \mathcal{U}$ distributed according to $\Q$, and with probability at least $1 - O(\eps)$, the two elements are identical (i.e. $p = q$).
\end{lemma}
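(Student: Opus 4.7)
The plan is to implement a shared rejection-sampling scheme in which both players simultaneously sample from $\P$ and $\Q$ using the same underlying random seed. Let $\mu := \P + \Q$ and $\tilde\mu := \mu/2$; note that the ratios $\P(x)/\mu(x)$ and $\Q(x)/\mu(x)$ both lie in $[0,1]$ for every $x$. The shared randomness will be an i.i.d.\ sequence $(X_i, \alpha_i)_{i \geq 1}$ with $X_i \sim \tilde\mu$ and $\alpha_i$ uniform on $[0,1]$. Player~$1$ outputs $p := X_J$ where $J := \min\{i : \alpha_i \leq \P(X_i)/\mu(X_i)\}$, and Player~$2$ outputs $q := X_K$ with $K$ defined analogously using $\Q$.

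Marginal correctness will be the standard rejection-sampling identity: conditional on $J = i$, the density of $X_i$ is proportional to $\tilde\mu(x) \cdot \P(x)/\mu(x) \propto \P(x)$, so $p \sim \P$; symmetrically $q \sim \Q$. For the agreement probability, write $A_i$ and $B_i$ for the two players' acceptance events at index $i$. Using the identities $\sum_x \min(\P(x),\Q(x)) = 1 - \tfrac12\|\P-\Q\|_1$ and $\sum_x \max(\P(x),\Q(x)) = 1 + \tfrac12\|\P-\Q\|_1$, a direct computation against $\tilde\mu$ yields
\[
\Pr[A_i \cap B_i] \;=\; \tfrac{1}{2}\!\left(1 - \tfrac12\|\P-\Q\|_1\right), \qquad \Pr[A_i \cup B_i] \;=\; \tfrac{1}{2}\!\left(1 + \tfrac12\|\P-\Q\|_1\right).
\]
The event $\{J = K = i\}$ is exactly the event that neither player accepts at indices $1,\ldots,i-1$ and both accept at index $i$, so by independence across $i$ the resulting geometric series sums to
\[
\Pr[J = K] \;=\; \frac{\Pr[A_1 \cap B_1]}{\Pr[A_1 \cup B_1]} \;=\; \frac{1 - \tfrac12\|\P-\Q\|_1}{1 + \tfrac12\|\P-\Q\|_1} \;\geq\; 1 - \|\P-\Q\|_1 \;\geq\; 1 - \eps.
\]
Since $J = K$ implies $p = q$, the outputs agree with probability $1 - O(\eps)$, as desired.

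I do not expect a serious obstacle: once the sampling scheme is written down, everything follows from the rejection-sampling identity and a one-line geometric-series computation. The minor technical items are (i) verifying that $J$ and $K$ are almost surely finite, which is immediate since $\Pr[A_1] = \Pr[B_1] = 1/2 > 0$, and (ii) if $\mathcal U$ is infinite or continuous, replacing sums by integrals against $\mu$ viewed as a reference measure and using Radon--Nikodym derivatives for the ratios $\P/\mu$ and $\Q/\mu$; the core calculation is unaffected.
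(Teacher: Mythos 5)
Your computations (the rejection-sampling marginals, the identities for $\sum_x \min(\P(x),\Q(x))$ and $\sum_x \max(\P(x),\Q(x))$, and the geometric-series identity $\Pr[J=K]=\Pr[A_1\cap B_1]/\Pr[A_1\cup B_1]$) are all correct, but the protocol itself misses the essential constraint that gives the lemma its content: in correlated sampling the first player knows only $\P$, the second knows only $\Q$, and the shared randomness must be fixed independently of both. Your scheme violates this twice: the shared sequence $X_i$ is drawn from $\tilde\mu=(\P+\Q)/2$, which no admissible randomness source can produce, and player~1's acceptance test $\alpha_i\le \P(X_i)/\mu(X_i)$ requires her to know $\Q$ as well (symmetrically for player~2). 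If both players were allowed to know both distributions, the lemma would be trivial anyway (shared randomness can directly sample a maximal coupling, giving agreement $1-\tfrac12\|\P-\Q\|_1$), so the version you prove is not the one the paper needs. Indeed, in the application (Step~2 of the single-shot strategy), Alice holds $\P_{R_\mi\mid x_i,W_C}$ and Bob holds $\P_{R_\mi\mid y_i,W_C}$; neither can form the mixture, so your protocol cannot be run there.

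The repair is small and recovers Holenstein's actual protocol, which is what the paper invokes: take the shared randomness to be i.i.d.\ pairs $(X_i,\alpha_i)$ with $X_i$ uniform on $\mathcal{U}$ (or any fixed reference measure agreed on in advance) and $\alpha_i$ uniform on $[0,1]$; player~1 accepts the first index with $\alpha_i\le \P(X_i)$ and player~2 the first index with $\alpha_i\le \Q(X_i)$. Each acceptance rule now depends only on that player's own distribution. Your analysis then goes through verbatim with the uniform reference measure in place of $\tilde\mu$: per-index acceptance probabilities become $1/|\mathcal{U}|$ (still positive, so $J,K<\infty$ a.s.), $\Pr[A_i\cap B_i]$ and $\Pr[A_i\cup B_i]$ are proportional to $\sum_x\min(\P(x),\Q(x))$ and $\sum_x\max(\P(x),\Q(x))$ respectively, and the same ratio bound gives agreement at least $1-\|\P-\Q\|_1\ge 1-O(\eps)$.
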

We call the protocol in the Lemma above the \emph{classical correlated sampling procedure}. The next lemma is the quantum extension of the correlated sampling lemma, proved by~\cite{DinurSV14} in order to obtain a parallel repetition theorem for entangled projection games, a class of two-player games. Their lemma is a robust version of the quantum state embezzlement procedure of~\cite{van2003universal}. 

\begin{lemma}[Quantum correlated sampling~\cite{DinurSV14}]
\label{lem:qcorsamp}
	Let $d$ be an integer and $\alpha > 0$. Then there exists an integer $d'$ depending on $d$ and $\alpha$, and a collection of unitaries $V_\psi$, $W_\psi$ acting on $\C^{d d'}$ for every state $\ket{\psi} \in \C^d \otimes \C^d$, such that the following holds: for any two states $\ket{\varphi}, \ket{\theta} \in \C^d \otimes \C^d$, 
	$$
		\| \overline{V}_\varphi \otimes W_\theta \ket{E_{dd'}} - \ket{\varphi} \ket{E_{d'}} \| \leq O(\max \{ \alpha^{1/12}, \left \| \, \ket{\varphi} - \ket{\theta} \, \right \|^{1/6} \} )
	$$
	where $\ket{E_{d}} \propto \sum_{j=1}^{d} \frac{1}{\sqrt{j}} \ket{j}\ket{j} $ is the $d$-dimensional embezzlement state.
\end{lemma}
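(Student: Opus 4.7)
The plan is to build on the van Dam--Hayden embezzlement procedure, which shows that $\ket{E_d}$ is universal for state preparation: for any target $\ket{\psi}\in\C^D\otimes\C^D$ there exist local unitaries $U,U'$ with $U\otimes U'\ket{E_d}$ close to $\ket{\psi}\otimes\ket{E_{d/\poly(D)}}$. Intuitively, the amplitudes $1/\sqrt{j}$ of $\ket{E_d}$ form a nearly scale-invariant distribution, so a local permutation of the basis can ``carve off'' a $\ket{\psi}$-shaped block while leaving a slightly smaller embezzlement state behind. The content of the lemma is to turn this into a construction that depends on $\ket{\psi}$ in a \emph{robust} way, so that two players acting with their own descriptions produce nearly the same joint state.

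The new ingredient for correlated sampling is to make the embezzlement unitaries robust functions of the input state via discretization. Given $\ket{\psi}=\sum_i\sqrt{p_i}\ket{a_i}\ket{b_i}$ in Schmidt form, I would round each $p_i$ to the nearest element of a geometric sequence $\{(1+\alpha)^{-k}\}_k$ (with a random shift, see below), producing a quantized state $\ket{\psi^\alpha}$, and assign each Schmidt component a multiplicity in the embezzlement register proportional to the reciprocal of its rounded weight. The unitary $V_\psi$ is then the local permutation that rearranges the basis of $\ket{E_{dd'}}$ so that the output is, in the idealized case, exactly $\ket{\psi^\alpha}\otimes\ket{E_{d'}}$. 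The crucial feature is that $V_\psi$ depends on $\ket{\psi}$ only through its rounded Schmidt data, so if $\ket{\varphi}$ and $\ket{\theta}$ share the same rounded spectrum and compatible Schmidt bases, the joint action $\overline{V}_\varphi\otimes W_\theta\ket{E_{dd'}}$ is essentially $\ket{\varphi^\alpha}\ket{E_{d'}}$, which is $O(\alpha)$-close to $\ket{\varphi}\ket{E_{d'}}$.

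The hard part is handling the instability of the Schmidt decomposition: near-degenerate or near-boundary Schmidt coefficients can cause small perturbations to rotate the bases by large angles, so naively close states may produce entirely mismatched rounded data. To address this I would randomize the rounding thresholds by a uniform shift $t\in[0,1]$ inside the exponent, so that with probability at least $1-O(\alpha)$ no $p_i$ lies near a rounding boundary and the same holds for $\ket{\theta}$'s Schmidt coefficients whenever $\|\ket{\varphi}-\ket{\theta}\|$ is smaller than the minimum gap; standard perturbation bounds for singular subspaces then let me match the Schmidt bases up to a small rotation on the ``good'' event. The exponents $1/12$ and $1/6$ in the final bound arise from propagating errors through the chain of conversions between $\ell_2$, fidelity, and trace-norm distances using Fact~\ref{fact:trace_vs_euclid} and the Fuchs--van de Graaf inequalities~\eqref{eq:fuchs-graaf}: each such conversion costs a square root, and balancing the boundary-loss contribution (growing in $1/\alpha$) against the input-perturbation contribution (shrinking in $\alpha$) leads to the stated fractional exponents. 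The main technical effort lies in this bookkeeping and in verifying that the carve-off operations are implementable as genuine local unitaries on the embezzlement register.
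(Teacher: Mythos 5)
Note first that the paper does not actually prove Lemma~\ref{lem:qcorsamp}: it is imported as a black box from Dinur, Steurer and Vidick~\cite{DinurSV14}, so your sketch can only be measured against their construction. Your starting point --- make van Dam--Hayden embezzlement robust by letting the local unitaries depend on the state only through Schmidt coefficients rounded to a geometric grid --- is indeed the right high-level idea and matches the spirit of the DSV procedure.

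There are, however, two genuine gaps. First, your repair of the boundary problem is a uniformly random shift $t$ of the rounding thresholds shared by both players, with a guarantee that holds ``with probability $1-O(\alpha)$.'' The lemma does not permit this: $V_\psi$ and $W_\psi$ must be \emph{fixed} unitaries depending only on $\psi$ (and $d,\alpha$), and the error bound must hold for every pair $(\ket{\varphi},\ket{\theta})$, not on average over a shared seed. To salvage the idea you would have to either derandomize the shift or implement it coherently in an ancilla register (itself carved from the embezzlement state) and then prove that the bad-shift branches remove only a small amount of amplitude \emph{and} that the output still factorizes as $\ket{\varphi}$ tensored with a fixed leftover state; none of this appears in your sketch, and it is precisely where the cited proof does real work (their construction handles all scales intrinsically rather than by randomization). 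Second, the Schmidt-basis instability is not disposed of by ``standard perturbation bounds for singular subspaces on the good event'': when many Schmidt coefficients of $\ket{\varphi}$ fall into one bucket, Davis--Kahan/Wedin-type bounds degrade with the inverse gap \emph{inside} the bucket, which can be arbitrarily small even when no coefficient is near a rounding boundary, so the random shift does not help there. Individual Schmidt vectors simply cannot be matched between the two players; only bucket projectors can, and the reason the residual within-bucket unitary ambiguity is harmless is the $u\otimes\overline{u}$ invariance of (near-)maximally entangled blocks --- which is exactly why the statement features $\overline{V}_\varphi$ rather than $V_\varphi$. Your proposal never invokes this conjugation, which indicates the basis-alignment step would not go through as written. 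Relatedly, the exponents $1/12$ and $1/6$ are not produced by Fuchs--van de Graaf and Fact~\ref{fact:trace_vs_euclid} bookkeeping alone; they come out of balancing the discretization and mismatch errors inside the specific DSV analysis, so the quantitative claim also rests on the missing steps.
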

We shall call the protocol in the Lemma above the \emph{quantum correlated sampling procedure}.

\section{Proof of the Main Theorem}

Let $G$ be a two-player one-round game with question distribution $\mu$ and referee predicate $V(x,y,a,b)$. Let $\A$ and $\B$ denote the alphabets of Alice's and Bob's answers, respectively. Let $\eval(G) = 1 - \eps$.

Consider an optimal entangled strategy for $G^n$, which consists of a shared entangled state $\ket{\psi}^{E_A E_B} \in \C^d \otimes \C^d$ and measurement POVMs for Alice and Bob, $\{A_{\xvec}^{\avec} \}$ and $\{B_{\yvec}^{\bvec}\}$ respectively. We will assume that $\ket{\psi}$ is symmetric; i.e., $\ket{\psi} = \sum_i \sqrt{\lambda_i} \ket{v_i} \ket{v_i}$ for some orthonormal basis $\{ \ket{v_i} \}$. This is without loss of generality, as we can always rotate (say) Bob's basis vectors to match Alice's basis vectors, and fold the unitary rotation into Bob's measurements. For $i \in [n]$, let $W_i$ denote the event that the players win coordinate $i$ using this optimal strategy. Let $W = W_1 \wedge \cdots \wedge W_n$ denote the event that the players win all coordinates. For a set $C \subseteq [n]$, let $W_C = \wedge_{i \in C} W_i$.

\begin{proposition}
\label{prop:subset}
	Suppose that $\log 1/\Pr(W) \leq \eps n/16 - \log 4/\eps$. Then there exists a set $C \subseteq [n]$ of size at most $t = \frac{8}{\eps} \left( \log 4/\eps + \log 1/\Pr(W) \right)$ such that
	$$
		\Pr_{i \notin C} (W_i | W_C) \geq 1 - \eps/2.
	$$
	where $i$ is chosen uniformly from $[n] - C$.
\end{proposition}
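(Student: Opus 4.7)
The plan is to build $C$ greedily by iteratively absorbing coordinates whose conditional win probability (given the currently accumulated event $W_C$) is too small. Initialize $C_0 = \emptyset$. At each stage $k \geq 0$, check whether
$$
\Pr_{i \notin C_k}\bigl(W_i \,\big|\, W_{C_k}\bigr) \;\geq\; 1 - \eps/2,
$$
the average being over a uniformly random $i \in [n] \setminus C_k$. If so, halt and output $C := C_k$. Otherwise, by an averaging argument there must exist some $j_{k+1} \in [n] \setminus C_k$ with $\Pr(W_{j_{k+1}} \mid W_{C_k}) < 1 - \eps/2$, and we set $C_{k+1} := C_k \cup \{j_{k+1}\}$ and continue.

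The heart of the argument is the chain-rule bound along the increasing sequence $C_0 \subset C_1 \subset \cdots$. At each step,
$$
\Pr(W_{C_{k+1}}) \;=\; \Pr(W_{C_k}) \cdot \Pr\bigl(W_{j_{k+1}} \,\big|\, W_{C_k}\bigr) \;<\; (1-\eps/2)\,\Pr(W_{C_k}),
$$
so by induction $\Pr(W_{C_k}) < (1-\eps/2)^k \leq e^{-\eps k/2}$. Since the full event $W = W_{[n]}$ implies $W_{C_k}$, we also have $\Pr(W_{C_k}) \geq \Pr(W) > 0$. Combining the two yields $k < (2/\eps) \log(1/\Pr(W))$, which is strictly less than $t$. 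Hence the procedure must halt after at most $t$ steps, and the resulting $C$ has $|C| \leq t$ and satisfies the desired conditional lower bound by the termination criterion.

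The only sanity check is that the procedure never runs out of coordinates, i.e.\ that $C_k \neq [n]$ while it is still running, so the average over $i \notin C_k$ is well-defined. Using the hypothesis $\log (1/\Pr(W)) \leq \eps n/16 - \log(4/\eps)$, one computes
$$
t \;=\; \tfrac{8}{\eps}\bigl(\log (4/\eps) + \log (1/\Pr(W))\bigr) \;\leq\; \tfrac{8}{\eps}\Bigl(\log (4/\eps) + \tfrac{\eps n}{16} - \log(4/\eps)\Bigr) \;=\; n/2,
$$
so certainly $|C_k| \leq t < n$ throughout. There is no substantive obstacle here: the proof is just the standard Holenstein-style greedy selection combined with the chain rule for $\Pr(W_C)$, and the bookkeeping is elementary. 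The mild slack between our derived bound $(2/\eps)\log(1/\Pr(W))$ and the stated $t$ leaves room for the additional $\log(4/\eps)$ term that will be convenient elsewhere in the paper.
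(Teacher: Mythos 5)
Your proof is correct, but it follows a genuinely different route from the paper. You build $C$ greedily, adding at each step a coordinate whose conditional win probability given $W_{C_k}$ is below $1-\eps/2$, and use the chain rule $\Pr(W_{C_{k+1}}) = \Pr(W_{C_k})\Pr(W_{j_{k+1}}\mid W_{C_k})$ together with the lower bound $\Pr(W_{C_k}) \geq \Pr(W)$ to force termination after at most $(2/\eps)\log(1/\Pr(W)) \leq t$ steps; the hypothesis on $\Pr(W)$ enters only to guarantee $t \leq n/2$ so that the average over $[n]\setminus C_k$ is always over a nonempty set. The paper instead argues probabilistically: it takes $C$ to be a multiset of $t$ independently and uniformly chosen indices, splits $\Pr(\neg W_i \mid W_C)$ according to the auxiliary event $W_{>1-\delta}$ (winning more than $(1-\delta)n$ rounds, with $\delta = \eps/8$), bounds the first branch by $\delta n/(n-t) \leq \eps/4$ and the second by $\Ex_C \Pr(\neg W_{>1-\delta}\mid W_C) \leq (1-\delta)^t/\Pr(W) \leq \eps/4$, and concludes by averaging over $C$. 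Your argument is more elementary (no auxiliary threshold event, no averaging over random subsets), yields an honest set rather than a multiset, and in fact gives a slightly stronger size bound; the paper's version is the standard Raz--Holenstein-style selection, which is convenient when one wants the flexibility of a random $C$, but since only existence of one good $C$ is needed downstream (all later uses depend only on $|C|$ and $\Pr(W_C)\geq\Pr(W)$), your construction serves equally well. The only cosmetic blemishes are the strict inequality $\Pr(W_{C_k}) < (1-\eps/2)^k$, which holds for $k\geq 1$ rather than $k=0$, and the implicit base-of-logarithm slack in passing from $e^{-\eps k/2}$ to $\log(1/\Pr(W))$; both are harmless given the generous margin to $t$.
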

\begin{proof}
	Set $\delta = \eps/8$. Let $W_{> 1 - \delta}$ denote the event that the players won more than $(1 - \delta)n$ rounds. To show existence of such a set $C$, we will show that $\Ex_C \Pr(\neg W_i | W_C) \leq \eps/2$, where $C$ is a (multi)set of $t$ independently chosen indices in $[n]$. This implies that there exists a particular set $C$ such that $\Pr(\neg W_i | W_C) \leq \eps/2$, which concludes the claim.
	
	First we write, for a fixed $C$,
	\begin{align*}
		\Pr ( \neg W_i | W_C) &= \Pr(\neg W_i | W_C, W_{> 1 - \delta}) \Pr(W_{> 1 - \delta} | W_C) + \\ &\qquad \qquad \Pr(\neg W_i | W_C, \neg W_{> 1 - \delta}) \Pr(\neg W_{> 1 - \delta} | W_C).
	\end{align*}
	Observe that $\Pr(\neg W_i | W_C \wedge W_{> 1 - \delta})$ is the probability that, conditioned on winning all rounds in $C$, the randomly selected coordinate $i \in [n] - C$ happens to be one of the (at most) $\delta n$ lost rounds. This is at most $\delta n/(n - t) \leq \eps/4$, where we use our assumption on $t$ from the Proposition statement. Now observe that 
	\begin{align*}
		\Ex_C \Pr(\neg W_{> 1 - \delta} | W_C) &\leq \Ex_C \frac{\Pr(W_C | \neg W_{> 1 - \delta})}{\Pr(W_C)} \\
								   &\leq \frac{1}{\Pr(W)} (1 - \delta)^t \\
								   &\leq \eps/4
	\end{align*}
	where in the second line we used the fact that $\Pr(W_C) \geq \Pr(W)$.	
\end{proof}

For the rest of the proof we will fix a set $C$ given by Proposition~\ref{prop:subset}. 

\subsection{Dependency-breaking variables}
\label{sec:classical_setup}

We introduce the random variables that play an important role in the proof of Theorem~\ref{thm:main}. Let $C \subseteq [n]$ be as given by Proposition~\ref{prop:subset}. We fix $C=\{m+1, m+2, \ldots, n\}$, where $m=n-|C|$, as this will easily be seen to hold without loss of generality. Let $(X_{[n]},Y_{[n]})$ be distributed according to $\mu_{[n]}$ and $(A_{[n]},B_{[n]})$ be defined from  $X_{[n]}$ and $Y_{[n]}$ as follows: 
$$
	\P_{A_{[n]} B_{[n]}|x_{[n]},y_{[n]}}(a_{[n]},b_{[n]}) = \bra{\psi} A_\xvec^\avec \otimes B_\yvec^\bvec \ket{\psi}.
$$
Let $(X_C,Y_C)$ and $\ac=(A_C,B_C)$ be random variables that denote the players' questions and answers respectively associated with the coordinates indexed by $C$. 

We use the random variables $\Omega$ and $R$ that are crucially used in Holenstein's proof of Raz's parallel repetition theorem. Let $D_1,\ldots,D_m$ be independent and uniformly distributed in $\{Alice,Bob\}$. Let $M_1,\ldots,M_m$ be independent random variables defined in the following way: for each $i \in [m]$,
\begin{align*}
	M_i = \left \{ \begin{array}{ll}
		X_i & \mbox{ if } D_i = Alice \\
		Y_i & \mbox{ if } D_i = Bob
		\end{array}
	\right.
\end{align*}
Now for $i \in [m]$, we define $\Omega_i := (D_i,M_i)$. We say that $\Omega_i$ \emph{fixes Alice's input} if $D_i = Alice$, and otherwise $\Omega_i$ fixes Bob's input. We write $\Omega$ to denote the random variable $(\Omega_1,\ldots,\Omega_m,X_C,Y_C)$, where $X_CY_C$ are Alice and Bob's questions in the coordinates indexed by $C$. For $i \in [m]$ we write $\Omega_\mi$ to denote the random variable $\Omega$ with $\Omega_i$ omitted. 

\begin{proposition}
	Conditioned on $\Omega$, $X_{[n]}$ and $Y_{[n]}$ are independent.
\end{proposition}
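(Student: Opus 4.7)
The plan is to show that conditioning on $\Omega$ preserves a product structure across coordinates, and within each coordinate makes $X_i$ and $Y_i$ independent. First I would exploit the construction: by hypothesis the pairs $(X_i,Y_i)$ are i.i.d.\ according to $\mu$ across $i \in [n]$, the direction variables $D_1, \ldots, D_m$ are mutually independent uniform bits, and the $D_i$'s are independent of $(X_{[n]}, Y_{[n]})$. Since $\Omega_i = (D_i, M_i)$ is a function only of $(X_i, Y_i, D_i)$ for $i \in [m]$, and the tail $(X_C, Y_C)$ of $\Omega$ depends only on the coordinates in $C$, the unconditional joint law factorizes over $i \in [n]$ into local factors that only involve $(X_i, Y_i)$ together with the appropriate piece of $\Omega$. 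Product laws remain product laws after conditioning on the individual factors, so
\[
\P_{X_{[n]} Y_{[n]} \mid \Omega}(x_{[n]}, y_{[n]}) \;=\; \prod_{i \in [m]} \P_{X_i Y_i \mid \Omega_i}(x_i, y_i) \cdot \prod_{i \in C} \mathbf{1}[X_i = x_i,\, Y_i = y_i].
\]

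Next I would verify that each per-coordinate factor has $X_i \perp Y_i$. For $i \in C$ this is immediate: both $X_i$ and $Y_i$ are fixed by $\Omega$, so each marginal is a point mass. For $i \in [m]$, observe that $\Omega_i = (D_i, M_i)$ always reveals exactly one of the two questions: if $D_i = \text{Alice}$ then $M_i = X_i$ is fixed (so $X_i$ is a constant under the conditional law, hence trivially independent of $Y_i$), and if $D_i = \text{Bob}$ then symmetrically $Y_i$ is a constant. In either case $\P_{X_i Y_i \mid \Omega_i} = \P_{X_i \mid \Omega_i} \cdot \P_{Y_i \mid \Omega_i}$. Plugging this into the product decomposition from the first paragraph yields
\[
\P_{X_{[n]} Y_{[n]} \mid \Omega} \;=\; \P_{X_{[n]} \mid \Omega}\cdot \P_{Y_{[n]} \mid \Omega},
\]
which is the claimed conditional independence.

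There is no real obstacle here; the statement is essentially a bookkeeping consequence of how $\Omega$ was designed. The one point worth double-checking is that $D_i$ is drawn independently of $(X_{[n]}, Y_{[n]})$, so that conditioning on ``$D_i = \text{Alice}, M_i = x$'' is genuinely equivalent to conditioning only on ``$X_i = x$'' as far as the joint law of $(X_{[n]}, Y_{[n]})$ is concerned — this is exactly what guarantees that no nonlocal dependence between the Alice-side and Bob-side registers is introduced by the conditioning.
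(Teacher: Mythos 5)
Your proof is correct and is exactly the standard argument the paper relies on (it states the proposition without proof, as in Holenstein's framework): conditioning the product measure coordinate-wise, and noting that each $\Omega_i$ fixes one of $X_i,Y_i$ (with $D_i$ independent of the questions) while $X_C,Y_C$ fix both, immediately gives the conditional product form $\P_{X_{[n]}Y_{[n]}\mid\Omega}=\P_{X_{[n]}\mid\Omega}\cdot\P_{Y_{[n]}\mid\Omega}$. No gaps.
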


Finally, we will define a \emph{dependency-breaking variable} $R := (\Omega,A_C,B_C)$, where $A_C$ and $B_C$ are the players' answers in the coordinates indexed by $C$. For $i \notin C$, we let $R_\mi := (\Omega_\mi,A_C,B_C)$. $R_i$ will refer to $\Omega_i$. We will use lowercase letters to denote instantiations of these random variables: e.g., $r_\mi$, $x_i$, and $y_i$ refer to specific values of $R_\mi$, $X_i$, and $Y_i$.

Throughout our proofs, all expectations are implicitly over the measure defined by $\P$. For example, the expectation $\Ex_{\Omega_\mi Z | x_i, y_i}$ indicates $\sum_{\omega_\mi, a_C, b_C} \P_{\Omega_\mi A_C B_C | x_i, y_i} (\omega_\mi, a_C, b_C)$. Given an event such as $W$ (winning all the coordinates) or $W_C$ (winning all the coordinates in $C$), $\P(W)$ and $\P(W_C)$ will mean the probability of these events with respect to the distribution $\P$. 

The following Lemma expresses the idea that, because $W_C$ is an event that occurs with not-too-small probability, conditioning on it cannot skew the distribution of variables corresponding to an average coordinate by too much. This Lemma follows in a straightforward manner from the~\cite{Hol09}.
\begin{lemma}
\label{lem:classical_skew}
The following statements hold on, average over $i$ chosen uniformly in $[m]$:
\medskip
\begin{enumerate}
	\item $\Ex_i \| \P_{R_i X_i Y_i | W_C} - \P_{R_i X_i Y_i} \|_1 \leq O(\sqrt{\delta})$
	\item $\Ex_i \left \| \P_{X_i Y_i R_\mi | W_C} - \P_{X_i Y_i} \cdot \P_{R_\mi | X_i W_C} \right \|_1 \leq O(\sqrt{\delta})$
		\item $\Ex_i \left \| \P_{X_i Y_i R_\mi | W_C} - \P_{X_i Y_i} \cdot \P_{R_\mi | Y_i W_C} \right \|_1 \leq O(\sqrt{\delta})$
\end{enumerate}
\medskip
where $\delta := \frac{1}{m} \left ( \log 1/\P(W_C) + |C| \log |\A| |\B| \right)$.
\end{lemma}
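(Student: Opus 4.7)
The plan is to reduce each statement via Pinsker's inequality (Fact~\ref{fact:pinsker}) and Jensen to a bound on an expected relative entropy or conditional mutual information, then to control it using the identity $S(\P_{\cdot\mid W_C}\,\|\,\P_\cdot)\leq \log 1/\P(W_C)$, subadditivity of relative entropy against a product reference, and Lemma~\ref{lem:quantum_raz}; throughout, write $\P':=\P_{\cdot\mid W_C}$.

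For Statement~1, observe that under $\P$ the triples $(R_i,X_i,Y_i)=(\Omega_i,X_i,Y_i)$ for $i\in[m]$ are i.i.d.\ across $i$, since $(X_i,Y_i)\sim\mu$ i.i.d.\ and $\Omega_i=(D_i,M_i)$ is a function of $(X_i,Y_i)$ together with an independent uniform coin $D_i$. Hence $\P_{R_{[m]}X_{[m]}Y_{[m]}}=\bigotimes_i \P_{R_i X_i Y_i}$, and by subadditivity of relative entropy against a product reference
\[
\sum_{i\in[m]} S(\P'_{R_i X_i Y_i}\,\|\,\P_{R_i X_i Y_i}) \;\leq\; S(\P'_{R_{[m]}X_{[m]}Y_{[m]}}\,\|\,\P_{R_{[m]}X_{[m]}Y_{[m]}}) \;\leq\; \log 1/\P(W_C) \;\leq\; m\delta.
\]
Dividing by $m$ and applying Pinsker followed by Jensen proves Statement~1.

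For Statements~2 and~3 (which are symmetric), I would apply the triangle inequality
\[
\|\P'_{X_i Y_i R_\mi}-\P_{X_i Y_i}\P'_{R_\mi\mid X_i}\|_1 \;\leq\; \|\P'_{X_i Y_i R_\mi}-\P'_{X_i Y_i}\P'_{R_\mi\mid X_i}\|_1 + \|\P'_{X_i Y_i}-\P_{X_i Y_i}\|_1,
\]
whose second term is controlled in expectation over $i$ by Statement~1 (after marginalizing out $R_i$). The first term equals $\Ex_{X_i\sim\P'}\|\P'_{Y_i R_\mi\mid X_i}-\P'_{Y_i\mid X_i}\P'_{R_\mi\mid X_i}\|_1$, which by Pinsker and Jensen is at most $\sqrt{2\,I(Y_i:R_\mi\mid X_i)_{\P'}}$. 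It thus suffices to show $\Ex_i I(Y_i:R_\mi\mid X_i)_{\P'}\leq O(\delta)$. Following the classical argument of~\cite{Hol09} adapted to the presence of the quantum answers $A_C,B_C$, I would compare $\P'$ to the product reference $\sigma := \P_{X_{[n]}Y_{[n]}\Omega_{[m]}} \otimes \mathrm{Unif}(A_C,B_C)$, which factorizes across the i.i.d.\ coordinate blocks $(X_i,Y_i,\Omega_i)$. A direct computation yields
\[
S(\P'\,\|\,\sigma) \;=\; S(\P'\,\|\,\P) + \Ex_{\P'}\left[\log\frac{\P(A_C B_C\mid X,Y)}{\mathrm{Unif}(A_C,B_C)}\right] \;\leq\; \log 1/\P(W_C) + |C|\log|\A||\B| \;=\; m\delta,
\]
using $\P(A_C B_C\mid X,Y)\leq 1$ and $\mathrm{Unif}(A_C,B_C)=(|\A||\B|)^{-|C|}$. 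Chain-rule manipulations combined with Lemma~\ref{lem:quantum_raz} applied to this $\sigma$ then produce the desired $\sum_{i\in[m]} I(Y_i:R_\mi\mid X_i)_{\P'}\leq O(m\delta)$.

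The principal subtlety is that $R_\mi$ itself depends on $i$, which rules out a direct application of Raz's lemma; the resolution is to absorb the $i$-dependence via the chain-rule decompositions used by Holenstein, whose total cost is captured by $S(\P'\,\|\,\sigma)\leq m\delta$. The extra summand $|C|\log|\A||\B|$ in $\delta$ is precisely the KL cost of approximating the quantumly correlated conditional $\P_{A_C B_C\mid X,Y}$ by a uniform reference, and is inherent to any approach that pulls the answers $A_C,B_C$ into the dependency-breaking variable.
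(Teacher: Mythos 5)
Your proof of Statement 1 is correct and is the standard argument (superadditivity of relative entropy against the i.i.d.\ product $\bigotimes_i \P_{R_i X_i Y_i}$, the bound $S(\P_{\cdot\mid W_C}\|\P_\cdot)\le \log 1/\P(W_C)$, then Pinsker and Jensen); note the paper itself offers no proof of this lemma beyond citing~\cite{Hol09}, so for that part you have actually supplied more detail than the paper. The reduction of Statements 2 and 3 to the bound $\Ex_i I(Y_i : R_\mi \mid X_i)_{\P'} \le O(\delta)$ via the triangle inequality and conditional Pinsker is also fine.

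The gap is in the one step that carries all the difficulty: the claim that $\sum_i I(Y_i : R_\mi\mid X_i)_{\P'} \le O(m\delta)$ follows from $S(\P'\|\sigma)\le m\delta$ by ``chain-rule manipulations combined with Lemma~\ref{lem:quantum_raz} applied to this $\sigma$.'' Raz's lemma with reference $\sigma=\P_{X_{[n]}Y_{[n]}\Omega_{[m]}}\otimes\mathrm{Unif}(A_C,B_C)$ controls $\sum_i I(X_iY_i\Omega_i : X_CY_CA_CB_C)_{\P'}$, i.e.\ the information each block has about the \emph{common} side register; it says nothing about the information $Y_i$ has about the \emph{other blocks'} variables $\Omega_\mi$ given $X_i$, which constitute most of $R_\mi$. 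And no generic implication of the kind you invoke is true: a distribution can satisfy $S(\P'\|\sigma)\le m\delta$ with $\sigma$ product across blocks and still have $\Ex_i I(\text{block}_i : \text{rest})=\Omega(1)\gg\delta$. For instance, take $m$ i.i.d.\ uniform bits conditioned on their parity (relative entropy $1$ to the product, yet each bit is determined by the others); closer to your setting, append to the true $\P_{X_{[n]}Y_{[n]}\Omega}$ a one-bit register $Z:=\bigoplus_{j\le m} M_j$ --- then $S(\P'\|\sigma)=1$ while $I(Y_i : \Omega_\mi Z\mid X_i)\ge \tfrac12$ for every $i$, since $\Omega_\mi$ reveals $M_j$ for all $j\ne i$ and $M_i=Y_i$ with probability $\tfrac12$. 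So the bound you need cannot be extracted from $S(\P'\|\sigma)\le m\delta$ alone; its truth relies on the specific structure that $A_C,B_C$ are produced by Alice from $X_{[n]}$ and by Bob from $Y_{[n]}$ (neither sees the coins $D_j$ or the other player's free questions) and that each $\Omega_j$ reveals only one of $X_j,Y_j$ chosen by an independent coin. Exploiting this is exactly the content of Holenstein's dedicated argument (the part the paper is implicitly citing), and your sketch replaces it with an appeal to a false generic inequality, so Statements 2 and 3 are not yet proved.
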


\subsection{Two key Lemmas, and proof of the Main Theorem}

For every $i \in [n] - C$, we will construct a collection of bipartite states $\{ \ket{\Psi_{r_\mi,x_i,y_i}} \} \subseteq \C^d \otimes \C^d$, which we call dependency-breaking states, that are indexed by the dependency-breaking variable $r_\mi$ defined above, and questions $(x_i,y_i)$. The following lemmas state the important properties of this collection of states:

\begin{lemma}[Usefulness Lemma]
\label{lem:useful}
	For all $r_\mi, x_i, y_i$, there exist POVMs $\{ \what{A}_{r_\mi,x_i}^{a_i} \}$ and $\{\what{B}_{r_\mi,y_i}^{b_i} \}$ acting on $\C^d$ such that
	$$
		\P_{A_i B_i | r_\mi, x_i, y_i} (a_i,b_i) = \Tr \left (  \what{A}_{r_\mi,x_i}^{a_i} \otimes \what{B}_{r_\mi,y_i}^{b_i} \, \Psi_{r_\mi,x_i,y_i} \right).
	$$
\end{lemma}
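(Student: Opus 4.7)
\emph{Proof plan.} The plan is to absorb the conditioning on $(r_\mi, x_i, y_i)$ into the shared state itself, so that the residual measurement producing $(A_i, B_i)$ factorizes into a tensor product of POVMs acting locally on Alice's and Bob's registers. First I would unfold
\[
    \P_{A_i B_i\mid r_\mi, x_i, y_i}(a_i, b_i) \;\propto\; \sum \mu_{[n]}(\xvec,\yvec)\,\bra{\psi}\,A_{\xvec}^{\avec}\otimes B_{\yvec}^{\bvec}\,\ket{\psi},
\]
with the sum ranging over all $(\xvec,\yvec,\avec,\bvec)$ consistent with the conditioning: for each $j\in[m]\setminus\{i\}$, $\omega_j$ fixes whichever of $x_j, y_j$ corresponds to $D_j$ while the complementary coordinate is free; $(x_C, y_C, a_C, b_C)$ is entirely fixed; and the remaining $a_j, b_j$ for $j\notin C\cup\{i\}$ are free.

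The critical step is to exploit the two product structures simultaneously: $\mu_{[n]} = \prod_j \mu$ is a product distribution, and $A_{\xvec}^{\avec}$ depends only on $\xvec$ while $B_{\yvec}^{\bvec}$ depends only on $\yvec$. Hence summing an unknown $y_j$ on an Alice-coordinate reweights \emph{only} the $B$ operator, by $\mu(y_j \mid x_j)$, and summing an unknown $x_j$ on a Bob-coordinate reweights only the $A$ operator, by $\mu(x_j \mid y_j)$. Marginalizing the free answers uses $\sum_a A_{\xvec}^{\avec} = \Id$ to collapse $A_{\xvec}^{\avec}$ down to $A_{\xvec}^{a_{\{i\} \cup C}}$, and likewise for $B$. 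Collecting everything gives
\[
    \P_{A_i B_i\mid r_\mi, x_i, y_i}(a_i, b_i) \;=\; \frac{\bra{\psi}\,\tilde A_{r_\mi, x_i}^{a_i}\otimes \tilde B_{r_\mi, y_i}^{b_i}\,\ket{\psi}}{\bra{\psi}\,P_{r_\mi, x_i}\otimes Q_{r_\mi, y_i}\,\ket{\psi}},
\]
where $\tilde A$ is the sum of $A_{\xvec}^{a_{\{i\}\cup C}}$ over the unknown Bob-coordinate $x_j$'s, weighted by $\prod_j \mu(x_j \mid y_j)$; $\tilde B$ is defined symmetrically; and $P := \sum_{a_i}\tilde A$, $Q := \sum_{b_i}\tilde B$. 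All four operators are positive semidefinite and act locally on Alice's or Bob's $\C^d$.

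With the factorization in hand I would simply set
\[
    \ket{\Psi_{r_\mi, x_i, y_i}} \;:=\; \frac{(P^{1/2}\otimes Q^{1/2})\ket{\psi}}{\bigl\|(P^{1/2}\otimes Q^{1/2})\ket{\psi}\bigr\|}, \qquad \what{A}_{r_\mi, x_i}^{a_i} \;:=\; P^{-1/2}\,\tilde A_{r_\mi, x_i}^{a_i}\,P^{-1/2},
\]
and $\what{B}_{r_\mi, y_i}^{b_i}$ symmetrically, with $P^{-1/2}$ read in the Moore--Penrose sense and each family padded to a genuine POVM by routing $\Id - \Pi_{\supp(P)}$ (resp.\ $\Id - \Pi_{\supp(Q)}$) to a garbage outcome; this padding is invisible when traced against $\Psi_{r_\mi, x_i, y_i}$, which is supported on the image of $P^{1/2}\otimes Q^{1/2}$. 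A direct computation then yields
\[
    \Tr\bigl(\what{A}_{r_\mi, x_i}^{a_i}\otimes \what{B}_{r_\mi, y_i}^{b_i}\,\Psi_{r_\mi, x_i, y_i}\bigr) \;=\; \frac{\bra{\psi}\tilde A\otimes\tilde B\ket{\psi}}{\bra{\psi}P\otimes Q\ket{\psi}} \;=\; \P_{A_i B_i\mid r_\mi, x_i, y_i}(a_i, b_i).
\]
The hardest part of the Usefulness Lemma itself is merely the bookkeeping that verifies the factorization; the genuine difficulty lies not here but in ensuring that this \emph{same} choice of $\ket{\Psi_{r_\mi, x_i, y_i}}$ additionally obeys the Sampleability Lemma, which is where the operator-analytic core of the parallel repetition theorem will live.
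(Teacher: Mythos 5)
Your proposal is correct and takes essentially the same route as the paper: average the strategy's POVMs over the unfixed questions (using that $X_{[n]}$ and $Y_{[n]}$ are conditionally independent given $\omega_\mi, x_i, y_i$, with the free coordinates weighted by $\mu(\cdot\mid m_j)$), absorb the square roots of the coarse-grained operators $P = A_{\omega_\mi,x_i}^{a_C}$ and $Q = B_{\omega_\mi,y_i}^{b_C}$ into $\ket{\psi}$ to form the dependency-breaking state, and conjugate the fine-grained operators $A_{\omega_\mi,x_i}^{a_C,a_i}$, $B_{\omega_\mi,y_i}^{b_C,b_i}$ by the generalized inverse square roots, so that everything cancels in the trace. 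The only difference is cosmetic: the paper's definition additionally inserts local unitaries $U_{\omega_\mi,x_i,a_C}$, $V_{\omega_\mi,y_i,b_C}$ (chosen so that $S\sqrt{\rho}$, $T\sqrt{\rho}$ are positive semidefinite) into both the state and the measurements; these cancel in the Usefulness computation and exist solely to enable the Powers--St\o rmer step in the Sampleability Lemma, which is exactly the caveat you flag at the end.
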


\begin{lemma}[Sampleability Lemma]
\label{lem:sampleable}
There exists an integer $d' \geq d$ such that for every $i, r_\mi, x_i, y_i$, there exist local unitaries $U_{r_\mi,x_i}, V_{r_\mi,y_i}$ acting on $\C^{d'}$ such that
$$
	\Ex_i \Ex_{X_i Y_i} \left [ \Ex_{R_\mi | x_i,y_i, W_C} \left \| U_{r_\mi,x_i} \otimes V_{r_\mi,y_i} \ket{E_{dd'}} - \ket{\Psi_{r_\mi,x_i,y_i}} \ket{E_{d'}} \right \| \right ] \leq O( ( \delta^{1/4} / \P(W_C))^{1/12} )
$$
where $\ket{E_{dd'}}$ and $\ket{E_{d'}}$ are $dd'$ and $d'$-dimensional embezzlement states, respectively, and $\delta$ is defined to be $\frac{1}{m} \left ( \log 1/\P(W_C) + |C| \log |\A| |\B| \right)$.
\end{lemma}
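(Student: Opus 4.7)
The plan is to instantiate the scheme outlined in the proof overview: first define explicit dependency-breaking states derived from the purification of the $G^n$ strategy, and then verify the desired approximate factorization via a combination of Ando's identity, quantum information theory, and the classical skew bounds of Lemma~\ref{lem:classical_skew}. Concretely, from the symmetric starting state $\ket{\psi}$ I would define averaged ``win-$C$'' operators on each side, of the form
\[
\wt{A}^{a_C}_{r_\mi,x_i,y_i} \;=\; \Ex_{X_\mi Y_\mi \mid r_\mi, x_i, y_i, W_C}\, A^{a_C}_{\xvec}, \qquad \wt{B}^{b_C}_{r_\mi,x_i,y_i} \;=\; \Ex_{X_\mi Y_\mi \mid r_\mi, x_i, y_i, W_C}\, B^{b_C}_{\yvec},
\]
and set $\ket{\Psi_{r_\mi,x_i,y_i}}$ proportional to $\sqrt{\wt{A}^{a_C}_{r_\mi,x_i,y_i}} \otimes \sqrt{\wt{B}^{b_C}_{r_\mi,x_i,y_i}}\, \ket{\psi}$. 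The Usefulness Lemma then falls out by measuring this state with appropriately rescaled versions of $A_{\xvec}^{a_i}$ and $B_{\yvec}^{b_i}$. In parallel I would define the one-sided proxies $\ket{\Phi_{r_\mi,x_i}}$ and $\ket{\Gamma_{r_\mi,y_i}}$ by replacing each conditioning on $(x_i,y_i)$ with conditioning on just $x_i$ or $y_i$, respectively, so that $\ket{\Phi_{r_\mi,x_i}}$ is computable from Alice's side information alone and symmetrically for Bob.

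The core technical step is to show that, on average over a uniform $i\in[m]$ and over $(R_\mi,X_i,Y_i)\sim \P_{R_\mi X_i Y_i \mid W_C}$,
\[
\Ex \bigl\| \ket{\Psi_{r_\mi,x_i,y_i}} - \ket{\Phi_{r_\mi,x_i}} \bigr\| \;+\; \Ex \bigl\| \ket{\Psi_{r_\mi,x_i,y_i}} - \ket{\Gamma_{r_\mi,y_i}} \bigr\| \;\leq\; O\!\left( (\delta/\P(W_C))^{1/4}\right).
\]
My plan for this estimate is a chain of standard reductions. Ando's identity converts $\ell_2$ distances between these symmetric ``rooted'' states into trace expressions in $\sqrt{\rho}$, where $\rho$ is the marginal of $\ket{\psi}$; Fact~\ref{fact:trace_vs_euclid} together with the Fuchs--van de~Graaf inequalities~(\ref{eq:fuchs-graaf}) reduces pure-state closeness to trace-distance closeness of the associated cq-states that include $Y_i$ (respectively $X_i$) as a classical register; Pinsker's inequality (Fact~\ref{fact:pinsker}) then bounds this by a square-root of a relative entropy. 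Lemma~\ref{lem:divergence_chain_rule} and the Quantum Raz Lemma (Lemma~\ref{lem:quantum_raz}), applied to an appropriate global cq-state recording $(X_{[n]},Y_{[n]},\ac)$ together with Alice's or Bob's quantum register, upper bound the relevant sum over $i$ of mutual informations by $\log 1/\P(W_C) + |C|\log|\A||\B|$, which is exactly $m\delta$. The classical skew bounds of Lemma~\ref{lem:classical_skew} are what let me replace various $\P_{\cdot\mid W_C}$-distributions by the unconditioned ones on which the Quantum Raz Lemma naturally applies, at the cost of the $\sqrt{\delta}$-error and the $1/\P(W_C)$ prefactor inherited from renormalizing after conditioning on $W_C$.

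Once this closeness is established, the final step is to invoke the quantum correlated sampling lemma of Dinur--Steurer--Vidick (Lemma~\ref{lem:qcorsamp}). By the triangle inequality, $\|\ket{\Phi_{r_\mi,x_i}}-\ket{\Gamma_{r_\mi,y_i}}\|$ is bounded by the sum of the two quantities above; invoking Lemma~\ref{lem:qcorsamp} with $\ket{\varphi}=\ket{\Phi_{r_\mi,x_i}}$, $\ket{\theta}=\ket{\Gamma_{r_\mi,y_i}}$, and $\alpha \sim (\delta/\P(W_C))^{1/4}$ produces local unitaries $U_{r_\mi,x_i}$ (depending only on Alice's $\ket{\Phi}$) and $V_{r_\mi,y_i}$ (depending only on Bob's $\ket{\Gamma}$) on an embezzlement state whose output is within $O((\delta^{1/4}/\P(W_C))^{1/12})$ of $\ket{\Psi_{r_\mi,x_i,y_i}}\ket{E_{d'}}$, pointwise. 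Jensen's inequality (applied to the concave $12$th-root) lets me pull the expectation through to match the statement of the Lemma.

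The main obstacle will be the information-theoretic estimate in the second step. Pointwise, the state $\ket{\Psi_{r_\mi,x_i,y_i}}$ really does depend on $y_i$ in arbitrary ways--this is the ``adversarial printing'' issue from the overview--and conditioning on $W_C$ can couple Alice's and Bob's quantum registers across all coordinates simultaneously. What rescues the argument is that, after averaging over $i$ and over $\P_{X_i Y_i R_\mi \mid W_C}$, the total information leakage from each $Y_i$ into the relevant quantum register is controlled by $\log 1/\P(W_C) + |C|\log|\A||\B|$, which is $o(n)$ under the hypothesis of Proposition~\ref{prop:subset}. Executing this rigorously requires choosing the correct cq-state to feed into the Quantum Raz Lemma (in particular, fixing the right purifying register so that the resulting quantum mutual information actually controls the pure-state distance produced by Ando's identity), and carefully interleaving the skew bounds of Lemma~\ref{lem:classical_skew} so that every appearance of $\P_{\cdot \mid W_C}$ in the chain is accounted for with no more than a constant-power loss in $\delta/\P(W_C)$.
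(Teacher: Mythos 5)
Your high-level architecture matches the paper's (define dependency-breaking states from the $G^n$ strategy, prove one-sided approximability by combining Ando's identity, Pinsker, Quantum Raz and the classical skew bounds, then finish with Dinur--Steurer--Vidick correlated sampling), but there is a genuine gap at exactly the point you flag as ``the main obstacle.'' The paper's states are not $\sqrt{A}\otimes\sqrt{B}\ket{\psi}$ for averaged POVM elements; they are $S\otimes T\ket{\psi}$ with $S_{\omega_\mi,x_i,a_C}=U_{\omega_\mi,x_i,a_C}(A_{\omega_\mi,x_i}^{a_C})^{1/2}$ and $T_{\omega_\mi,y_i,b_C}=V_{\omega_\mi,y_i,b_C}(B_{\omega_\mi,y_i}^{b_C})^{1/2}$, where the unitaries $U,V$ are chosen (via polar/singular value decomposition) so that $S\sqrt{\rho}$ and $T\sqrt{\rho}$ are positive semidefinite. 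These corrective unitaries are not cosmetic: they are what lets the Powers--St{\o}rmer inequality convert the trace-distance bound on $\sqrt{\rho}\,\overline{A_{\omega_\mi,x_i}^{a_C}}\sqrt{\rho}$ versus $\sqrt{\rho}\,\overline{A_{\omega}^{a_C}}\sqrt{\rho}$ (which is what Raz$+$Pinsker actually gives you, since that is Bob's conditional reduced state) into Euclidean closeness of the specific purifications. For your naive square-root states the key intermediate estimate is simply false: if Bob ``prints'' $y_i$ into an ancilla before measuring, then $\ket{\Psi_{r_\mi,x_i,y_i}}$ carries a definite $y_i$ in that ancilla while your proxy $\ket{\Phi_{r_\mi,x_i}}$ (whose Bob operator is averaged over $y_i$, and whose square root of an average is not an average of square roots) does not, so their distance is $\Omega(1)$ even though every mutual-information quantity in your chain stays $O(\delta)$. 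Equivalently, the step ``Ando converts the $\ell_2$ distance into a trace expression which Pinsker controls'' breaks, because $\|\sqrt{A}\sqrt{\rho}-\sqrt{A'}\sqrt{\rho}\|_F^2\le\|\sqrt{\rho}A\sqrt{\rho}-\sqrt{\rho}A'\sqrt{\rho}\|_1$ requires the rooted factors to be positive semidefinite, which $\sqrt{A}\sqrt{\rho}$ is not in general.

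Two secondary issues. First, your operators average over $X_\mi Y_\mi$ conditioned on \emph{both} $x_i,y_i$ and on $W_C$: given $r_\mi$ (which already contains $x_C,y_C,a_C,b_C$) the conditioning on $W_C$ is vacuous, and making Alice's operator depend on $y_i$ would destroy the product structure needed for the Usefulness Lemma and for the sampling unitaries to be computable from each player's local data. Second, the paper does not feed a $W_C$-conditioned state into Quantum Raz; it applies Raz to the state where only Alice has measured (paying only $|C|\log|\A|$ in relative min-entropy), then applies Bob's coarse measurement as an isometry, and only at the end projects onto $W_C$ and renormalizes --- this is where the single $1/\P(W_C)$ factor, and the $\log 1/\P(W_C)$ inside $\delta$ via Lemma~\ref{lem:classical_skew}, actually enter. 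Your sketch of ``recording $(X_{[n]},Y_{[n]},\ac)$ and conditioning throughout'' may be repairable, but as written it leaves unexplained how the conditioned quantum state retains a structure to which Raz's lemma and the one-sided comparison apply.
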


Lemma~\ref{lem:useful} shows that the states $\ket{\Psi_{r_\mi,x_i,y_i}}$ are \emph{useful} to have; they allow Alice and Bob to produce answers in the $i$'th coordinate whose statistics are consistent with the dependency-breaking variable $r_\mi$ and their inputs $(x_i,y_i)$. Lemma~\ref{lem:sampleable} shows that these states are \emph{locally generatable} by Alice and Bob, when given joint access to preshared entanglement, the dependency-breaking variable $r_\mi$ and their own inputs $x_i$ and $y_i$ respectively.

Using these two Lemmas we can prove the Main Theorem.

\begin{proof}[Proof of the Main Theorem]
Consider the following strategy for the game $G$. Alice and Bob share beforehand the embezzlement state $\ket{E_{dd'}}$ of dimension $dd'$ given by Lemma~\ref{lem:sampleable}, and they also have access to shared randomness. Given inputs $(x_i,y_i)$ distributed according to $\P_{X_i Y_i} = \mu$:
\medskip
	\begin{enumerate}
		\item Alice and Bob jointly sample a uniformly random $i \in [n] - C$.
		\item Alice and Bob jointly, approximately sample $R_\mi$ from $\P_{R_\mi | x_i, y_i, W_C}$ using the classical correlated sampling procedure.
		\item Alice applies $U_{r_\mi,x_i}$ to her side of $\ket{E_{dd'}}$
		\item Bob applies $V_{r_\mi,y_i}$ to his side of $\ket{E_{dd'}}$
		\item Alice measures her side of the entanglement using $\{ \what{A}_{r_\mi,x_i}^{a_i} \}$ and outputs the outcome $a_i$
		\item Bob measures his side of the entanglement using $\{ \what{B}_{r_\mi,y_i}^{b_i} \}$ and outputs the outcome $b_i$
	\end{enumerate}
	\medskip
	We now analyze the success probability of this strategy. We will use $\wt{\P}$ to denote the distribution of variables in the probability space associated with an execution of this strategy. For example, we will write $\wt{\P}_{R_\mi | X_i Y_i}$ to denote the distribution of $R_\mi$ conditioned on $X_i Y_i$ that is sampled in Step 1. From Lemma~\ref{lem:classical_skew} we have that on average over $i$, $\P_{X_i Y_i R_\mi  | W_C} \approx \P_{X_i Y_i} \cdot \P_{R_\mi | X_i W_C} \approx \P_{X_i Y_i} \cdot \P_{R_\mi | Y_i W_C}$, where ``$\approx$'' means closeness in statistical distance. By invoking the classical correlated sampling procedure of Lemma~\ref{lem:ccorsamp}, we get
	$$
		\Ex_i \| \P_{X_i Y_i} \cdot \wt{\P}_{R_\mi | X_i Y_i} - \P_{X_i Y_i R_\mi | W_C} \|_1 \leq O(\sqrt{\delta}).
	$$
	After Step 3, Alice and Bob will possess a state $\ket{\Lambda_{r_\mi,x_i,y_i}}$ such that
	$$
		\Ex_i \Ex_{X_i Y_i} \left [ \Ex_{R_\mi | x_i, y_i, W_C} \| \Lambda_{r_\mi,x_i,y_i} - \Psi_{r_\mi,x_i,y_i} \|_1 \right] \leq \eta
	$$
	where $\eta = O( ( \delta^{1/4} / \P(W_C))^{1/12} )$. Consider the measurement process in Steps 4 and 5. Let $\wt{\P}_{A_i B_i | r_\mi, x_i, y_i}$ denote the distribution of measurement outcomes in this strategy, conditioned on their inputs and a sampled value of $r_\mi$.  By Lemma~\ref{lem:useful} and the fact that the trace norm is nonincreasing under quantum operations, we have that
	$$
		\Ex_i \Ex_{X_i Y_i} \left [ \Ex_{R_\mi | x_i, y_i, W_C} \| \wt{\P}_{A_i B_i | x_i, y_i, r_\mi} - \P_{A_i B_i | x_i, y_i, r_\mi} \|_1 \right ] \leq \eta
	$$
	or equivalently 
	$$
		\Ex_i \| \P_{X_i Y_i} \cdot \wt{\P}_{R_\mi | X_i Y_i W_C} \cdot \wt{\P}_{A_i B_i | x_i, y_i, r_\mi} - \P_{X_i Y_i} \cdot \P_{R_\mi | X_i Y_i W_C} \cdot\P_{A_i B_i R_\mi | X_i Y_i W_C} \|_1 \leq \eta.
	$$
	By Lemma~\ref{lem:classical_skew} we have $\Ex_i \| \P_{X_i Y_i | W_C} - \P_{X_i Y_i} \| \leq \sqrt{\delta}$. By triangle inequality and that $\wt{\P}_{X_i Y_i} = \P_{X_i Y_i}$, we have
	$$
		\Ex_i \| \wt{\P}_{X_i Y_i R_\mi A_i B_i} - \P_{X_i Y_i R_\mi A_i B_i | W_C} \|_1 \leq O(\eta).
	$$
	Note that $\wt{\P}_{X_i Y_i R_\mi A_i B_i}$ represents the probability distribution of all the variables present in the strategy above. Let $W_i$ denote the probability the players win the $i$th coordinate. Thus we get 
	\begin{align}
		\Ex_i | \wt{\P}(W_i) - \P(W_i | W_C) | \leq O(\eta).
		\label{eq:main_bound}
	\end{align}
	Assume that 
	$$
		\P(W) \geq \frac{c s \log n}{\eps^{17} n^{1/4}}
	$$
	where $c > 0$ is a universal constant, and $s$ is the bit-length of the players' answers. Since $\P(W_C) \geq \P(W)$, and using our bound on $|C|$ (from Proposition~\ref{prop:subset}) and our bound on $\delta$ (from Lemma~\ref{lem:classical_skew}), this implies that the right hand side of~\eqref{eq:main_bound} is at most $\eps/4$ (for an appropriate choice of $c$). This implies that 
	\begin{align*}
		\Ex_i \wt{\P}(W_i) &\geq \Ex_i \P(W_i | W_C) - \eps/4 \\
						&\geq 1 - \eps/2 - \eps/4 \\
						&> \eval(G)
	\end{align*}
	where in the second line we used the bound from Proposition~\ref{prop:subset}. However, this implies that there exists an $i$ such that $\wt{\P}(W_i) > \eval(G)$, which is a contradiction. Therefore $\P(W) \leq \frac{c s \log n}{\eps^{17} n^{1/4}}$.

\end{proof}


\section{Proofs of the two Key Lemmas}

Now we turn to proving the two key lemmas above, the Usefulness Lemma and the Sampleability Lemma.

\subsection{Quantum states and operators}
\label{sec:quantum_setup}

In this subsection we define the states $\ket{\Psi_{r_\mi,x_i,y_i}}$ and measurement operators $\{ \what{A}_{r_\mi,x_i}^{a_i} \}$ and $\{\what{B}_{r_\mi,y_i}^{b_i} \}$. Recall that the dependency-breaking variable $R$ consists of the set of fixed questions $\Omega = (X_C,Y_C,\Omega_1,\ldots,\Omega_m)$ and fixed answers $Z = (A_C,B_C)$ for the coordinates in $C$. 

\medskip
\noindent \textbf{Coarse-grained measurements.} We first \emph{coarsen} the measurement POVMs $\{A_{\xvec}^{\avec} \}$ and $\{B_{\yvec}^{\bvec}\}$ that constitute Alice and Bob's strategy in $G^n$ to construct a set of \emph{intermediate measurements}, which essentially produce answers for the games in set $C$, conditioned on a setting of $\Omega$.

Fix $i$, $\omega$, $a_C$, $b_C$, $x_i$, $y_i$. 
Define
 \begin{align*}
 A_{\omega_\mi,x_i}^{a_C} = \sum_{\avec | a_C} \Ex_{\Xvec | \omega_\mi, x_i} A_\xvec^{\avec} \qquad \qquad  \qquad B_{\omega_\mi,y_i}^{b_C} = \sum_{\bvec | b_C} \Ex_{\Yvec | \omega_\mi, y_i} B_\yvec^{\bvec}
\end{align*}
where $\avec | a_C$ (resp. $\bvec | b_C$) indicates summing over all tuples $\avec$ consistent with the suffix $a_C$ (resp. $\bvec$ consistent with suffix $b_C$) and recall that $\Ex_{\Xvec | \omega_\mi, x_i}$ is shorthand for $\sum_{x_{[n]}} \P_{\Xvec | \Omega_\mi = \omega_\mi, X_i = x_i}(x_{[n]})$. 
We also define
 \begin{align*}
 A_{\omega}^{a_C} = \Ex_{\Xvec | \omega} A_\xvec^{a_C} \qquad \qquad \qquad  B_{\omega}^{b_C} = \Ex_{\Yvec | \omega} B_\yvec^{b_C}.
\end{align*}

Let $\rho$ denote the reduced density matrix of $\ket{\psi}$ on Alice's side. Since we have assumed that $\ket{\psi}$ is symmetric, $\rho$ is also the reduced density matrix on Bob's side. For all $i$, $\omega$, $x_i,y_i,a_C,b_C$, let $U_{\omega_\mi, x_i, a_C}$, $U_{\omega,a_C}$, $V_{\omega_\mi, y_i, b_C}$, and $V_{\omega,b_C}$ be unitaries such that
\begin{align*}
&U_{\omega_\mi, x_i, a_C} (A_{\omega_\mi, x_i}^{a_C})^{1/2} \sqrt{\rho} \qquad &V_{\omega_\mi, y_i, b_C} (B_{\omega_\mi, y_i}^{b_C})^{1/2} \sqrt{\rho} \\
&U_{\omega, a_C} (A_{\omega}^{a_C})^{1/2} \sqrt{\rho} \qquad &V_{\omega, b_C} (B_{\omega}^{b_C})^{1/2} \sqrt{\rho}
\end{align*}
are positive semidefinite. Such unitaries can be found via singular value decompositions. For notational convenience, let
\begin{align*}
	&S_{\omega_\mi, x_i, a_C} = U_{\omega_\mi, x_i, a_C} (A_{\omega_\mi, x_i}^{a_C})^{1/2} \qquad 
	&T_{\omega_\mi, y_i, b_C} = V_{\omega_\mi, y_i, b_C} (B_{\omega_\mi, y_i}^{b_C})^{1/2} \\
	&S_{\omega, a_C} = U_{\omega, a_C} (A_{\omega}^{a_C})^{1/2} \qquad &T_{\omega, b_C} = V_{\omega, b_C} (B_{\omega}^{b_C})^{1/2}
\end{align*}

\medskip
\noindent \textbf{Fine-grained measurements.} Now we can define the \emph{fine-grained measurements} that Alice and Bob can apply to obtain answers for the $i$'th game. Define
$$
	\what{A}_{r_\mi,x_i}^{a_i} = S_{\omega_\mi,x_i,a_C}^{-1} A_{\omega_\mi,x_i}^{a_C, a_i} S_{\omega_\mi,x_i,a_C}^{-1}
	\qquad \qquad \qquad 
		\what{B}_{r_\mi,y_i}^{b_i} = T_{\omega_\mi,y_i,b_C}^{-1}  B_{\omega_\mi,y_i}^{b_C,b_i}  T_{\omega_\mi,y_i,b_C}^{-1}
$$
where 
$$
A_{\omega_\mi,x_i}^{a_C,a_i} = \sum_{\avec | a_C, a_i} \Ex_{\Xvec | \omega_\mi, x_i} A_{\xvec}^\avec  \qquad \qquad \qquad B_{\omega_\mi,y_i}^{b_C,b_i} = \sum_{\bvec | b_C, b_i} \Ex_{\Yvec | \omega_\mi, y_i} B_{\yvec}^\bvec
$$
and $\avec | a_C, a_i$ (resp. $\bvec | b_C, b_i$) denotes summing over all $\avec$ consistent with $a_C$ and $a_i$ (resp. all $\bvec$ consistent with $b_C$ and $b_i$). It is easy to verify that the sets $\{\what{A}_{r_\mi,x_i}^{a_i}\}_{a_i \in \A}$ and $\{\what{B}_{r_\mi,y_i}^{b_i}\}_{b_i \in \B}$ form POVMs. Here, for a square matrix $A$, $A^{-1}$ denotes its generalized inverse.

\medskip
\noindent \textbf{States.} Now we are ready to define the states. Fix $i$, $r_\mi = (\omega_\mi,a_C,b_C)$, and $x_i, y_i$. Then let
$$
	\ket{\Psi_{r_\mi,x_i,y_i}} = \frac{S_{\omega_\mi,a_C,x_i} \otimes T_{\omega_\mi,b_C,y_i} \ket{\psi}}{ \left \| S_{\omega_\mi,a_C,x_i} \otimes T_{\omega_\mi,b_C,y_i} \ket{\psi} \right \|}.
$$
Observe that the normalization $\left \| S_{\omega_\mi,a_C,x_i} \otimes T_{\omega_\mi,b_C,y_i} \ket{\psi} \right \|^2$ is equal to $\P_{A_C B_C | \omega_\mi, x_i, y_i}(a_C,b_C)$. 

\subsection{Proof of Usefulness Lemma (Lemma~\ref{lem:useful})}
This Lemma follows from a simple calculation: for every $x_i, y_i, a_i, b_i, r_\mi$:
\begin{align*}
&\Tr \left (  \what{A}_{r_\mi,x_i}^{a_i} \otimes \what{B}_{r_\mi,y_i}^{b_i} \, \Psi_{r_\mi,x_i,y_i} \right) \\
&= \frac{1}{\left \| S_{\omega_\mi,a_C,x_i} \otimes T_{\omega_\mi,b_C,y_i} \ket{\psi} \right \|^2} \Tr \left ( A_{\omega_\mi,x_i}^{a_C, a_i} \otimes B_{\omega_\mi,y_i}^{b_C,b_i} \ketbra{\psi}{\psi} \right ) \\
&= \frac{1}{\P_{A_C B_C | \omega_\mi, x_i, y_i}(a_C,b_C)} \sum_{\avec | a_C, a_i} \sum_{\bvec | b_C, b_i} \Ex_{\Xvec \Yvec | \omega_\mi, x_i, y_i} \Tr \left ( A_\xvec^\avec \otimes B_\yvec^\bvec \ketbra{\psi}{\psi} \right) \\
&= \frac{\P_{A_i B_i A_C B_C | \omega_\mi, x_i, y_i}(a_i,b_i,a_C,b_C)}{\P_{A_C B_C | \omega_\mi, x_i, y_i}(a_C,b_C)}  \\
&= \P_{A_i B_i | r_\mi, x_i, y_i} (a_i,b_i).
\end{align*}
In the second equality we used that conditioned on $\Omega$, $X_{[n]}$ and $Y_{[n]}$ are independent, so therefore $\Ex_{X_{[n]} | \omega_\mi, x_i} \Ex_{Y_{[n]} | \omega_\mi,y_i} = \Ex_{\Xvec \Yvec | \omega_\mi, x_i, y_i}$. In the last equality we used that $r_\mi = (\omega_\mi,a_C,b_C)$. This concludes the Usefulness Lemma.

\subsection{Proof of the Sampleability Lemma (Lemma~\ref{lem:sampleable})}

\paragraph{Overview.} Here we give some intuition. We first analyze an ensemble of states $\{ \ket{\Gamma_{x_i,x_C,a_C}} \}$ (for now we omit mention of the dependency-breaking variable $R$ for simplicity). These are indexed by Alice's questions in the $i$'th coordinate, her questions in the $C$ coordinates, as well as her answers in the $C$ coordinates. The state $\ket{\Gamma_{x_i,x_C,a_C}}$ roughly represents the state of the players where only Alice has applied her measurements -- Bob hasn't done anything yet. 

Fix a $y_i$, $x_C$, $a_C$. For average $x_i, x_i'$ that are independently sampled from the marginal distribution $\P_{X_i | Y_i = y_i}$, we will show that 
$$
	\| \ket{\Gamma_{x_i,x_C,a_C}} - \ket{\Gamma_{x_i',x_C,a_C}} \| \sim \frac{1}{n}.
$$
To handle issues such as Alice ``printing'' her input onto the state $\ket{\psi}$ (as discussed in the introduction), the definition of $\ket{\Gamma_{x_i,x_C,a_C}}$ requires local unitaries that ``undo'' such overt actions of Alice and Bob -- this is accomplished by the unitaries $U$ and $V$ defined in Section~\ref{sec:quantum_setup}. 

Then, we consider what happens when we apply Bob's measurement to both states $\ket{\Gamma_{x_i,x_C,a_C}}$ and $\ket{\Gamma_{x_i',x_C,a_C}}$, and condition on obtaining answers $b_C$ for the $C$ coordinates. His measurement will depend on the questions $y_i$ and $y_C$. The post-measurement states will be precisely $\ket{\Psi_{x_i,y_i,x_C,y_C,a_C,b_C}}$ and $\ket{\Psi_{x_i',y_i,x_C,y_C,a_C,b_C}}$. The distance between these states will be, roughly speaking, the distance between $\ket{\Gamma_{x_i,x_C,a_C}}$ and $\ket{\Gamma_{x_i',x_C,a_C}}$ divided by the probability of Bob obtaining outcome $b_C$ conditioned on Alice obtaining $a_C$. If we average this distance over all choices of $x_C, y_C, a_C, b_C$ that imply the event $W_C$, we get that the average distance between $\ket{\Psi_{x_i,y_i,x_C,y_C,a_C,b_C}}$ and $\ket{\Psi_{x_i',y_i,x_C,y_C,a_C,b_C}}$ is approximately $\frac{1}{n \P(W_C)}$. If $\P(W)$ is much greater than $1/n$, then this distance is small. We then invoke quantum correlated sampling (Lemma~\ref{lem:qcorsamp}), and that proves the Sampleability Lemma.

\paragraph{Proof.} We introduce the following state: 
$$
\xi_{\Omega \Xvec E_AE_B A_C} = \sum_{\omega, \xvec, a_C} \P_{\Omega \Xvec} (\omega, \xvec) \, \ketbra{\omega\, \xvec}{\omega\, \xvec}   \otimes \sqrt{A_{\xvec}^{a_C}} \ketbra{\psi}{\psi} \sqrt{A_{\xvec}^{a_C}} \otimes \ketbra{a_C}{a_C}.
$$
If we trace out the $E_A$ register, we have that
	\begin{align}
		\xi_{\Omega \Xvec E_B A_C} &= \sum_{\omega, \xvec, a_C} \P_{\Omega \Xvec} (\omega, \xvec) \, \ketbra{\omega\, \xvec}{\omega\, \xvec}   \otimes \sqrt{\rho} \overline {A_{\xvec}^{a_C}} \sqrt{\rho} \otimes \ketbra{a_C}{a_C} \notag \\
		&\preceq  \sum_{\omega, \xvec, a_C} \P_{\Omega \Xvec} (\omega, \xvec) \, \ketbra{\omega\, \xvec}{\omega\, \xvec}   \otimes \sqrt{\rho} \overline {A_{\xvec}^{a_C}} \sqrt{\rho} \otimes \Id \notag\\
		&= \sum_{\omega, \xvec} \P_{\Omega \Xvec} (\omega, \xvec) \, \ketbra{\omega\, \xvec}{\omega\, \xvec}   \otimes \rho \otimes \Id \notag,
	\end{align}
where $\rho$ is the reduced density matrix of $\ket{\psi} = \sum_j \sqrt{\lambda_j} \ket{v_j}\ket{v_j}$ on $E_B$, $\overline{A_{\xvec}^{a_C}}$ denotes the entry-wise complex conjugate of $A_{\xvec}^{a_C}$ with respect to the basis $\{\ket{v_j}\}$, and the last equality uses $\sum_{a_C} \overline{A_{\xvec}^{a_C}} = \Id$. From the definition of $S_\infty$ we have
	\begin{align*}
	|C| \cdot \log |\A| &\geq S_\infty \left (\xi_{\Omega \Xvec E_B A_C} \, \middle \| \, \xi_{\Omega \Xvec} \otimes \xi_{E_B} \otimes \frac{\Id}{\Tr(\Id)} \right ) \\
	&\geq S \left (\xi_{\Omega \Xvec E_B A_C} \, \middle \| \, \xi_{\Omega \Xvec} \otimes \xi_{E_B} \otimes \frac{\Id}{\Tr(\Id)} \right ) \qquad \qquad &\text{($S(\cdot \| \cdot) \leq S_\infty(\cdot \| \cdot)$)} \\ 
		&\geq \Ex_{\Omega, A_C} S \left (\xi_{\Xvec E_B | \omega, a_C} \, \middle \| \, \xi_{\Xvec | \omega} \otimes \xi_{E_B} \right ) \qquad \qquad &\text{(Lemma~\ref{lem:divergence_chain_rule})} 
\end{align*}
	Now we apply Quantum Raz's Lemma:
\begin{align}
	\Ex_{\Omega, A_C} \Ex_i \,\, I(X_i ; E_B | \omega, a_C)_\xi \leq \frac{|C| \cdot \log |\A|}{m} \leq \delta
	\label{eq:claim23-3}
\end{align}
where recall that we defined $\delta = (|C| \log |\A| \cdot |B|)/m$. Applying the inequalities of Pinsker and Jensen, we obtain
\begin{align}
	\Ex_{\Omega, A_C} \Ex_i \Ex_{X_i | \omega, a_C} \left \|\xi_{E_B | \omega, x_i, a_C} - \xi_{E_B | \omega, a_C} \right \|_1  \leq \sqrt{\delta}.
\label{eq:trace_bound0}
\end{align}
These marginal density matrices have a nice description. Fix $i, \omega, x_i, a_C$. First we note that the state $\xi_{E_B | \omega, x_i, a_C}$ does not depend on $\omega_i$, because we are already conditioning on $x_i$. Thus we can write it as $\xi_{E_B | \omega_\mi, x_i, a_C}$. Then
\begin{align*}
\xi_{E_B | \omega_\mi,a_C, x_i} &= \frac{1}{\P_{A_C | \omega_\mi, x_i}(a_C)} \sum_{\xvec} \P_{\Xvec | \omega_\mi, x_i}(\xvec) \sqrt{\rho} \overline{A_{\xvec}^{a_C}} \sqrt{\rho}\\ 
&= \frac{1}{\P_{A_C | \omega_\mi, x_i}(a_C)} \sqrt{\rho} \left( \sum_{\xvec} \P_{\Xvec | \omega_\mi, x_i}(\xvec)  \overline{A_{\xvec}^{a_C}} \right) \sqrt{\rho} \\
&= \frac{1}{\P_{A_C | \omega_\mi, x_i}(a_C)} \sqrt{\rho} \overline{A_{\omega_\mi, x_i}^{a_C}} \sqrt{\rho}.
\end{align*}
Similarly,
\begin{align*}
\xi_{E_B | \omega,a_C} = \frac{1}{\P_{A_C | \omega}(a_C)} \sqrt{\rho} \overline{A_{\omega}^{a_C}} \sqrt{\rho}.
\end{align*}
For all $\omega$, $x_i$, $a_C$, define the following (unnormalized) states:
\begin{align}
	\ket{\Gamma_{\omega_\mi,x_i,a_C}} = S_{\omega_\mi,x_i,a_C} \otimes \Id \, \ket{\psi} \qquad \qquad \qquad \ket{\Gamma_{\omega,a_C}} = S_{\omega,a_C} \otimes \Id \, \ket{\psi}
\end{align}
where the $S$ operators were defined in Section~\ref{sec:quantum_setup}. Let $\gamma_{\omega_\mi,x_i,a_C} = (\P_{A_C | \omega_\mi, x_i}(a_C))^{1/2} = \| \ket{\Gamma_{\omega_\mi,x_i,a_C}} \|$ and $\gamma_{\omega,a_C} = (\P_{A_C | \omega}(a_C))^{1/2} = \| \ket{\Gamma_{\omega,a_C}} \|$ denote their norms. We will write
$$
	\ket{\wt{\Gamma}_{\omega_\mi,x_i,a_C}} = \gamma_{\omega_\mi,x_i,a_C}^{-1} \ket{\Gamma_{\omega_\mi,x_i,a_C}} \qquad \qquad \qquad \ket{\wt{\Gamma}_{\omega,a_C}} = \gamma_{\omega,a_C}^{-1} \ket{\Gamma_{\omega,a_C}}
$$
to denote the normalized states.

For notational convenience we will suppress mention of $\omega_\mi$ and $z = (a_C,b_C)$, and implicitly carry them around. Thus, for example, when we write $\ket{\Gamma_{x_i}}$ and $\ket{\Gamma_{\omega_i}}$, we implicitly mean $\ket{\Gamma_{\omega_\mi,x_i,a_C}}$ and $\ket{\Gamma_{\omega,a_C}}$, respectively.

Fix $x_i$, and consider the following:
\begin{align}
	&\| \ket{\wt{\Gamma}_{x_i}} - \ket{\wt{\Gamma}_{\omega_i}} \|^2 \notag \\
	&= \left ( \bra{\wt{\Gamma}_{x_i}} - \bra{\wt{\Gamma}_{\omega_i}} \right) \left ( \ket{\wt{\Gamma}_{x_i}} - \ket{\wt{\Gamma}_{\omega_i}} \right) \notag \\
	&= \bra{\psi} (\gamma_{x_i}^{-1} S_{x_i} - \gamma_{\omega_i}^{-1} S_{\omega_i})^\dagger (\gamma_{x_i}^{-1} S_{x_i} - \gamma_{\omega_i}^{-1} S_{\omega_i}) \otimes \Id \ket{\psi} \notag  \\
	&= \Tr \left ( \sqrt{\rho} (\gamma_{x_i}^{-1} S_{x_i} - \gamma_{\omega_i}^{-1} S_{\omega_i})^\dagger (\gamma_{x_i}^{-1} S_{x_i} - \gamma_{\omega_i}^{-1} S_{\omega_i}) \sqrt{\rho} \right) \qquad \text{(Ando's Identity)}  \notag  \\
	&= \| \gamma_{x_i}^{-1} S_{x_i} \sqrt{\rho} - \gamma_{\omega_i}^{-1} S_{\omega_i} \sqrt{\rho} \|_F^2. \notag  \\
\intertext{Next we use the Powers-St\o rmer inequality~\cite{powers1970free}, which states that for positive semidefinite operators $A, B$, we have $\|A - B \|_F^2 \leq \|A^2 - B^2 \|_1$. Since $S_{x_i} \sqrt{\rho}$ and $S_{\omega_i} \sqrt{\rho}$ are by construction are positive semidefinite, the above is bounded by
}
	&\leq \| \gamma_{x_i}^{-2} S_{x_i} \rho S_{x_i}^\dagger - \gamma_{\omega_i}^{-2} S_{\omega_i} \rho S_{\omega_i}^\dagger \|_1. 	\label{eq:trace_bound} \\
\intertext{We can write $S_{x_i} \rho S_{x_i}^\dagger = U_{x_i} (A_{x_i})^{1/2} \rho (A_{x_i})^{1/2} U_{x_i}^\dagger = \sqrt{\rho} A_{x_i} \sqrt{\rho}$
and  $S_{\omega_i} \rho S_{\omega_i}^\dagger = \sqrt{\rho} A_{\omega_i} \sqrt{\rho}$. Next we observe that for any square matrix $A$, $\| A \|_1 = \| \overline{A} \|_1$, where $\overline{A}$ denotes the entry-wise complex conjugate in some basis. By taking the complex conjugate with respect to the basis that diagonalizes $\rho$, we have that~\eqref{eq:trace_bound} is equal to}
	&\| \gamma_{x_i}^{-2} \sqrt{\rho} \overline{A_{x_i}} \sqrt{\rho} - \gamma_{\omega_i}^{-2} \sqrt{\rho} \overline{A_{\omega_\mi}} \sqrt{\rho} \|_1.
	\label{eq:trace_bound1}
\end{align}
We see that~\eqref{eq:trace_bound1}, averaged over $i, \omega, a_C$ and $x_i$ is exactly the quantity bounded in~\eqref{eq:trace_bound0}. Applying Jensen's inequality, we have
\begin{align}
	\delta^{1/4} &\geq \Ex_i  \Ex_{\Omega A_C X_i} \| \ket{\wt{\Gamma}_{\omega_\mi, x_i, a_C}} - \ket{\wt{\Gamma}_{\omega,a_C}} \| \\
	&\geq \Ex_i  \Ex_{\Omega A_C X_i} \| \ketbra{\wt{\Gamma}}{\wt{\Gamma}}_{\omega_\mi, x_i, a_C} - \ketbra{\wt{\Gamma}}{\wt{\Gamma}}_{\omega, a_C} \|_1
	\label{eq:10}
\end{align}
where in the second line we used Fact~\ref{fact:trace_vs_euclid}, and we write $\ketbra{\wt{\Gamma}}{\wt{\Gamma}}_{\omega_\mi, x_i, a_C}$ instead of  \\ $\ketbra{\wt{\Gamma}_{\omega_\mi, x_i, a_C}}{\wt{\Gamma}_{\omega_\mi, x_i, a_C}}$ to save space.

Define the cq-states
$$
	\Phi_{\Omega X_i E_A E_B A_C}^i = \sum_{\omega, a_C, x_i} \P_{\Omega A_C X_i}(\omega, a_C, x_i) \,\, \ketbra{\omega x_i}{\omega x_i} \otimes \ketbra{\wt{\Gamma}}{\wt{\Gamma}}_{\omega_\mi, x_i, a_C} \otimes \ketbra{a_C}{a_C}
$$
and
$$
	\what{\Phi}_{\Omega X_i E_A E_B A_C}^i = \sum_{\omega, a_C, x_i} \P_{\Omega A_C X_i}(\omega, a_C, x_i) \,\, \ketbra{\omega x_i}{\omega x_i} \otimes \ketbra{\wt{\Gamma}}{\wt{\Gamma}}_{\omega, a_C} \otimes \ketbra{a_C}{a_C}
$$
so that the bound in~\eqref{eq:10} is equivalent to
\begin{align}
	\Ex_i \left \| \Phi_{\Omega X_i E_A E_B A_C}^i - \what{\Phi}_{\Omega X_i E_A E_B A_C}^i \right \|_1 \leq \delta^{1/4}
\end{align}
We define the quantum operation $\E$ acting on registers $\Omega E_B$ as follows: for all $\omega$ and density matrices $\tau$,
$$
	\E: \ketbra{\omega}{\omega} \otimes \tau \mapsto \ketbra{\omega}{\omega} \otimes \sum_{b_C} T_{\omega,b_C} \tau T_{\omega,b_C}^\dagger \otimes \ketbra{b_C}{b_C}.
$$

In other words, the quantum operation $\E$ will, controlled on $\Omega$, apply the measurement corresponding to the $T_{\omega,b_C}$ operators (defined in Section~\ref{sec:quantum_setup}) to the $E_B$ part of the state, and save the measurement outcomes in an ancilla register.

The operation $\E$ is an isometry, so we have that 
\begin{align}
	\Ex_i \left \| \E \left ( \Phi_{\Omega X_i E_A E_B A_C}^i \right) - \E \left ( \what{\Phi}_{\Omega X_i E_A E_B A_C}^i \right) \right \|_1 \leq \delta^{1/4}.
\end{align}
Let us examine what happens when we apply $\E$ to $\Phi_{\Omega X_i E_A E_B A_C}^i$:
\begin{align*}
	&\E\left (\Phi_{\Omega X_i E_A E_B A_C}^i \right ) \\
	&= \Ex_{\Omega A_C X_i} \ketbra{\omega x_i}{\omega x_i} \otimes \sum_{b_C} T_{\omega,b_C} \ketbra{\wt{\Gamma}_{\omega_\mi, x_i, a_C}}{\wt{\Gamma}_{\omega_\mi, x_i, a_C}} T_{\omega, b_C}^\dagger  \otimes \ketbra{a_C b_C}{a_C b_C} \\
	&= \Ex_{\Omega X_i} \sum_{a_C} \P_{A_C | \omega, x_i}(a_C) \ketbra{\omega x_i}{\omega x_i} \otimes \sum_{b_C} \frac{T_{\omega,b_C} \ketbra{\Gamma_{\omega_\mi, x_i, a_C}}{\Gamma_{\omega_\mi, x_i, a_C}} T_{\omega, b_C}^\dagger}{\P_{A_C | \omega_\mi, x_i}(a_C)}  \otimes \ketbra{a_C b_C}{a_C b_C} \\
	&= \Ex_{\Omega X_i} \ketbra{\omega x_i}{\omega x_i} \otimes \sum_{a_C, b_C} T_{\omega,b_C} \ketbra{\Gamma_{\omega_\mi, x_i, a_C}}{\Gamma_{\omega_\mi, x_i, a_C}} T_{\omega, b_C}^\dagger  \otimes \ketbra{a_C b_C}{a_C b_C}
\end{align*}
where in the second equality we used that the normalization of $\ketbra{\wt{\Gamma}}{\wt{\Gamma}}$ is equal to $\P_{A_C | \omega_\mi, x_i}(a_C)$, and that $\P_{A_C | \omega, x_i}(a_C) = \P_{A_C | \omega_\mi, x_i}(a_C)$. Similarly, we have that
$$
	\E\left (\what{\Phi}_{\Omega X_i E_A E_B A_C}^i \right ) = \Ex_{\Omega X_i} \ketbra{\omega x_i}{\omega x_i} \otimes \sum_{a_C, b_C} T_{\omega,b_C} \ketbra{\Gamma_{\omega, a_C}}{\Gamma_{\omega, a_C}} T_{\omega, b_C}^\dagger  \otimes \ketbra{a_C b_C}{a_C b_C}.
$$

Define $\Lambda^i_{\Omega X_i E_A E_B A_C B_C} = \E\left (\Phi_{\Omega X_i E_A E_B A_C}^i \right )$ and $\what{\Lambda}^i_{\Omega X_i E_A E_B A_C B_C} = \E\left (\what{\Phi}_{\Omega X_i E_A E_B A_C}^i \right )$. In both these states, the event of $W_C$ is well defined: the registers $X_C Y_C$ (which are part of the dependency-breaking variable $\Omega$) and $A_C B_C$ are classical. Furthermore, we claim that the probability of the event $W_C$ in $\Lambda^i$ and $\what{\Lambda}^i$ are equal to the probability of $W_C$ in the actual repeated strategy. Let
$$
	\Pi = \sum_{\substack{x_C, y_C, a_C,b_C : \\ V(x_C,y_C,a_C, b_C) = 1}} \ketbra{x_C y_C a_C b_C} {x_C y_C a_C b_C}
$$
be the projector onto the subspace corresponding to the event $W_C$. Then for all $i$
\begin{align*}
	&\Tr \left( \Pi \Lambda^i \right ) \\
	&= \sum_{\omega, x_i} \P_{\Omega X_i}(\omega x_i) \sum_{\substack{a_C, b_C : \\ V(x_C,y_C,a_C,b_C) = 1}} \bra{\Gamma_{\omega_\mi,x_i,a_C}} T^\dagger_{\omega,b_C} T_{\omega,b_C} \ket{\Gamma_{\omega_\mi,x_i,a_C}} \\
	&= \sum_{\omega, x_i} \P_{\Omega X_i}(\omega x_i) \sum_{\substack{a_C, b_C : \\ V(x_C,y_C,a_C,b_C) = 1}} \bra{\psi}  \left ( S_{\omega_\mi,x_i, a_C} \otimes T_{\omega,b_C} \right)^\dagger \left ( S_{\omega_\mi,x_i, a_C} \otimes T_{\omega,b_C} \right) \ket{\psi} \\
	&= \sum_{\omega, x_i} \P_{\Omega X_i}(\omega x_i) \sum_{\substack{a_C, b_C : \\ V(x_C,y_C,a_C,b_C) = 1}} \bra{\psi}  \left ( \sqrt{A_{\omega_\mi,x_i}^{a_C}} \otimes \sqrt{B_\omega^{b_C}} \right)^\dagger \left ( \sqrt{A_{\omega_\mi,x_i}^{a_C}} \otimes \sqrt{B_\omega^{b_C}} \right) \ket{\psi}.
\end{align*}
Using the definitions of $A_{\omega_\mi,x_i}^{a_C}$ and $B_\omega^{b_C}$ we see that this quantity is identical to $\P(W_C)$. Similar reasoning shows that $\Tr \left( \Pi\what{\Lambda}^i \right ) = \P(W_C)$.

Let $\Lambda^i_{\Omega X_i E_A E_B A_C  B_C| W_C} = (\Pi \Lambda^i \Pi)/\P(W_C)$ and $\what{\Lambda}^i_{\Omega X_i E_A E_B A_C  B_C| W_C} = (\Pi \what{\Lambda}^i \Pi)/\P(W_C)$ denote $\Lambda^i$ and $\what{\Lambda}^i$ \emph{conditioned} on the event $W_C$. So we have
\begin{align}
	\Ex_i \left \| \Lambda^i_{\Omega X_i E_A E_B A_C  B_C| W_C} - \what{\Lambda}^i_{\Omega X_i E_A E_B A_C  B_C| W_C} \right \|_1 \leq \frac{\delta^{1/4}}{\P(W_C)}.
	\label{eq:cond_dist}
\end{align}
Let us bundle together the $\Omega$ and $A_C B_C$ registers into $R$. For all $r = (\omega,a_C,b_C)$ and $x_i$, define
$$
\ket{\Psi_{r,x_i}} = \frac{S_{\omega_\mi,x_i,a_C} \otimes T_{\omega,b_C} \ket{\psi}}{ \left \| S_{\omega_\mi,x_i,a_C} \otimes T_{\omega,b_C} \ket{\psi} \right \| } 
\qquad \qquad \ket{\Psi_{r}} = \frac{S_{\omega,a_C} \otimes T_{\omega,b_C} \ket{\psi}}{ \left \| S_{\omega,a_C} \otimes T_{\omega,b_C} \ket{\psi} \right \| }
$$
Then we see that
$$
	\Lambda^i_{R X_i E_A E_B | W_C} = \Ex_{R X_i | W_C} \ketbra{r x_i}{r x_i} \otimes \ketbra{\Psi_{r,x_i}}{\Psi_{r,x_i}}
$$
and
$$
	\what{\Lambda}^i_{R X_i E_A E_B | W_C} = \Ex_{R | W_C} \ketbra{r}{r} \otimes \Ex_{X_i | \omega} \ketbra{x_i}{x_i} \otimes \ketbra{\Psi_{r}}{\Psi_{r}}.
$$
We see that $\Lambda^i_{R X_i E_A E_B | W_C}$ and $\what{\Lambda}^i_{R X_i E_A E_B | W_C}$ are both cq-states that are classical on $R X_i$ and quantum on $E_A E_B$. The inequality in~\eqref{eq:cond_dist} implies that the trace distance between the classical parts of $\Lambda^i_{W_C}$ and $\what{\Lambda}^i_{W_C}$ is at most $\delta^{1/4}/\P(W_C)$. Thus we can change the classical part of $\what{\Lambda}^i_{W_C}$ to match the classical part of $\Lambda^i_{W_C}$ by at most doubling the error:
$$
\Ex_i \left \| \Ex_{R X_i | W_C} \ketbra{r x_i}{r x_i} \otimes \left (\ketbra{\Psi_{r,x_i}}{\Psi_{r,x_i}} - \ketbra{\Psi_{r}}{\Psi_{r}} \right ) \right \|_1 \leq 2\frac{\delta^{1/4}}{\P(W_C)}.
$$
which implies that
\begin{align*}
	\Ex_i \Ex_{R X_i | W_C} \left \|\ketbra{\Psi_{r,x_i}}{\Psi_{r,x_i}} -  \ketbra{\Psi_{r}}{\Psi_{r}} \right \|_1 \leq \frac{2\delta^{1/4}}{\P(W_C)}.
\end{align*}
By Lemma~\ref{lem:classical_skew}, $\Ex_i \| \P_{\Omega_i X_i | W_C} - \P_{\Omega_i X_i} \|_1 \leq \sqrt{\delta}$. Applying that to the above, we get
\begin{align*}
	\Ex_i \Ex_{\Omega_i X_i} \left [ \Ex_{R_\mi | \omega_i, x_i, W_C} \left \|\ketbra{\Psi_{r,x_i}}{\Psi_{r,x_i}} -  \ketbra{\Psi_{r}}{\Psi_r} \right \|_1 \right ] \leq \frac{2\delta^{1/4}}{\P(W_C)} + \sqrt{\delta}
\end{align*}
where the middle expectation over $\Omega_i X_i$ is over the \emph{prior} distribution (i.e. before conditioning on the event $W_C$). Now observe that in this prior distribution, $\Omega_i$ fixes $Y_i$ with probability $1/2$, so we in fact get
\begin{align*}
	\Ex_i \Ex_{X_i Y_i} \left [ \Ex_{R_\mi| x_i, y_i, W_C} \left \|\ketbra{\Psi_{r_\mi,x_i,y_i}}{\Psi_{r_\mi,x_i,y_i}} -  \ketbra{\Psi_{r_\mi,y_i}}{\Psi_{r_\mi,y_i}} \right \|_1 \right ]  \leq \frac{4\delta^{1/4}}{\P(W_C)} + 2\sqrt{\delta}
\end{align*}
where the states $\ket{\Psi_{r_\mi,x_i,y_i}}$ were defined in Section~\ref{sec:quantum_setup}, and $\ket{\Psi_{r_\mi,y_i}}$ is $\ket{\Psi_r}$ where $r = (r_\mi,\omega_i)$ and $\omega_i$ fixes $Y_i = y_i$. Applying the Fuchs-van der Graaf inequality, we obtain a bound in terms of Euclidean distance:
\begin{align}
	\Ex_i \Ex_{X_i Y_i} \left [ \Ex_{R_\mi | x_i, y_i, W_C} \left \|\ket{\Psi_{r_\mi,x_i,y_i}} -  \ket{\Psi_{r_\mi,y_i}} \right \| \right ] \leq O \left( (\delta^{1/4}/\P(W_C) )^{1/2} \right)
	\label{eq:bob_bound}
\end{align}
Similar reasoning implies that 
\begin{align}
	\Ex_i \Ex_{X_i Y_i} \left [ \Ex_{R_\mi | x_i, y_i, W_C} \left \|\ket{\Psi_{r_\mi,x_i,y_i}} -  \ket{\Psi_{r_\mi,x_i}} \right \| \right ]  \leq O \left( (\delta^{1/4}/\P(W_C) )^{1/2} \right)
	\label{eq:alice_bound}
\end{align}
where $\ket{\Psi_{r_\mi,x_i}}$ is $\ket{\Psi_r}$ where $r = (r_\mi,\omega_i)$ and $\omega_i$ fixes $X_i = x_i$.
By triangle inequality, we have
\begin{align}
\Ex_i \Ex_{X_i Y_i} \left [ \Ex_{R_\mi | x_i, y_i, W_C} \left \|\ket{\Psi_{r_\mi,y_i}} -  \ket{\Psi_{r_\mi,x_i}} \right \| \right ]  \leq O \left( (\delta^{1/4}/\P(W_C) )^{1/2} \right).
\label{eq:alice_bob_bound}
\end{align}
Let $\eta := O \left( (\delta^{1/4}/\P(W_C) )^{1/2} \right)$. Fix $r_\mi, x_i, y_i$. Since $r_\mi$ is public, Alice knows $r_\mi, x_i$, and thus knows a classical description of the state $\ket{\Phi_{r_\mi,x_i}}$. Similarly, Bob knows a classical description of the state $\ket{\Phi_{r_\mi,y_i}}$. By the Quantum Correlated Sampling Lemma of~\cite{DinurSV14} with parameter $\alpha = \eta^6$, there exists a dimension $d'$ that depends only on $d$ and $\alpha$, and unitaries $U_{r_\mi,x_i}$ and $V_{r_\mi,y_i}$ such that
$$
\| U_{r_\mi,x_i} \otimes V_{r_\mi,y_i} \ket{E_{dd'}} - \ket{\Psi_{r_\mi,x_i}} \ket{E_{d'}} \| \leq O(\max \{ \alpha^{1/12}, \left \| \, \ket{\Psi_{r_\mi,x_i}} - \ket{\Psi_{r_\mi,y_i}} \, \right \|^{1/6} \} ).
$$
We can average this over $i$, $x_i, y_i$, and $r_\mi$ to get that
\begin{align*}
&\Ex_i \Ex_{X_i Y_i} \left [ \Ex_{R_\mi | x_i, y_i, W_C} \| U_{r_\mi,x_i} \otimes V_{r_\mi,y_i} \ket{E_{dd'}} - \ket{\Psi_{r_\mi,x_i}} \ket{E_{d'}} \| \right ] \\
&\leq \Ex_i \Ex_{X_i Y_i} \left [ \Ex_{R_\mi | x_i, y_i, W_C} O(\max \{ \alpha^{1/12}, \left \| \, \ket{\Psi_{r_\mi,x_i}} - \ket{\Psi_{r_\mi,y_i}} \, \right \|^{1/6} \} ) \right ] \\
&\leq O(\alpha^{1/24}) \\
&= O(\eta^{1/12})
\end{align*}
where in the second inequality we used the following fact: for an nonnegative random variable $X$ with mean $\mu = \Ex X$, we can bound the expectation $\Ex \max \{ \sqrt{\mu}, X \} \leq O(\sqrt{\mu})$. Using the bound~\eqref{eq:alice_bound}, we get
$$
\Ex_i \Ex_{X_i Y_i} \left [ \Ex_{R_\mi | x_i, y_i, W_C} \| U_{r_\mi,x_i} \otimes V_{r_\mi,y_i} \ket{E_{dd'}} - \ket{\Psi_{r_\mi,x_i,y_i}} \ket{E_{d'}} \| \right ] \leq O(\eta^{1/12})
$$
as desired.


\paragraph{Acknowledgments.} This work was supported by Simons Foundation grant 360893 and National Science Foundation Grant 1218547. The author thanks both the Institute of Mathematical Sciences at the National University of Singapore, and the Weizmann Institute of Science for hospitable stays during which this research was conducted. The author also thanks Corinna Li, Mohammad Bavarian, Govind Ramnarayan, and anonymous referees for helpful feedback and discussions.

\bibliography{qverbitsky}

\end{document}